\newcommand{\R}{\mathbb R}
\newcommand{\Z}{\mathbb Z}
\newcommand{\C}{\mathbb C}
\newcommand{\Po}{\mathbb P}
\newcommand{\ep}{\varepsilon}
\newcommand{\g}{\gamma}
\newcommand{\et}{\eta}
\renewcommand{\lg}{\Lambda}
\renewcommand{\v}{\mathfrak{v}}
\newcommand{\N}{\mathbb{N}}
\newcommand{\ji}{\langle}
\newcommand{\jd}{\rangle}
\newcommand{\dist}{{\rm dist}}
\newcommand{\diam}{{\rm diam}}
\newcommand{\good}{{\rm good}}
\newtheorem{thm}{Theorem}[section]
\newtheorem{lem}[thm]{Lemma}
\theoremstyle{remark}
\newtheorem{rem}{\bf Remark}[section]
\theoremstyle{definition}
\newtheorem{defn}[thm]{Definition}
\numberwithin{equation}{section}
\begin{document}

\title[Localization for  random  long-range model]{Localization for random operators on $\mathbb{Z}^d$ with the long-range hopping}
\author[Shi]{Yunfeng Shi}
\author[Wen]{Li Wen}
	\author[Yan]{Dongfeng Yan}
    \address[Shi]{School of Mathematics,
	 Sichuan University,
	 Chengdu 610064,
	China
	 }
	 \email{yunfengshi@scu.edu.cn}
	 	\address[Wen]{School of Mathematics,
	 Sichuan University,
	 Chengdu 610064,
	China
	 }
	\email{liwen.carol98@gmail.com}
\address[Yan] {School of Mathematics and Statistics,
	Zhengzhou University,
	Zhengzhou 450001,
	China}
\email{yandongfeng@zzu.edu.cn}

%\email{yandongfeng@zzu.edu.cn}}
\date{}
\keywords{Localization; Long-range hopping; Multi-scale analysis; Random operator;  Green's function estimates}

\begin{abstract}
In this paper,  we investigate random operators on $\mathbb{Z}^d$ with  H\"older continuously distributed potentials   and the  long-range hopping. The hopping amplitude decays with the inter-particle distance  $\|\bm x\|$ as  $e^{-\log^{\rho}(\|\bm x\|+1)}$ with $\rho>1,\bm x\in\Z^d$. By employing the multi-scale analysis (MSA) technique, we prove  that  for  large disorder,   the random operators  have   pure point spectrum with localized  eigenfunctions  whose decay rate is the same  as  the hopping term.      This gives a  partial  answer to a conjecture  of  Yeung and Oono [{\it Europhys. Lett.} 4(9), (1987): 1061-1065].  
\end{abstract}

\maketitle
\section{Introduction}
The  random Schr\"odinger operator  (i.e., the Anderson model) on $\Z^d$ was first introduced  by Anderson  in the   seminal work  \cite{And58} to describe the motion of  noninteracting quantum particles in disordered media.   It turns out that the Anderson localization\footnote {In the present, by  the Anderson  localization we mean the (spectral) exponential localization, namely, pure point spectrum with exponentially decaying eigenfunctions.}  transitions properties for the Anderson model  rely  heavily on the dimension $d$, the
strength  of the disorder, and the energy. In the physical phase diagram,  it is believed that Anderson localization occurs for all energies and  all non-zero disorder if $d=1,2$, while for the  case of $d\geq 3$ and small disorder, there should exist the absolutely continuous spectrum in some  energy interval.  Mathematically, localization has been proven for three  regimes: 
(i) for all energies and arbitrary disorder in $d = 1$,  (ii) in any dimension and for all
energies at   large disorder, and (iii) near the edges of the spectrum in any dimension and for arbitrary disorder. Indeed, the first rigorous proof of the localization for random operators was
due to Goldsheid-Molchanov-Pastur \cite{GMP77}: they proved  the pure point spectrum for some one-dimensional  continuous random Schr\"odinger operators. A  proof of localization for the one dimensional Anderson model  was obtained by Kunz-Souillard \cite{KS80}. In higher dimensions, Fr\"ohlich-Spencer \cite{FS83} proved, either for  large disorder or extreme energies, the absence of diffusion for  the random Schr\"odinger operator  by developing the  remarkable  multi-scale analysis (MSA) method. Based on the Green's function estimates in  \cite{FS83},  \cite{FMSS85,DLS85,SW86} finally obtained the Anderson localization at either large disorder or extreme energies. An alternative method for the proof of the localization for random operators, known as the fractional moment method (FMM), was developed by Aizenman-Molchanov \cite{AM93}. This celebrated  method also has numerous applications in localization problems for random operators on $\Z^d$, such as  the first  proof of  both the dynamical localization  \cite{Aiz94}  and  power-law localization (for random operators $\Z^d$ with the power-law long-range hopping) \cite{AM93}.  However, the problem of proving the existence of the absolutely continuous spectrum for the Anderson model remains largely open.  For more results on the study of the localization for random operators, we refer to \cite{Kir08, AW15} and references therein.
 
While random models with finite-range hopping  (e.g., the Laplacian as in the Anderson model)  work nicely in describing variety of materials, the  long-range hopping is often found in different physical systems.  In mathematics,  the study of localization  type problems for long-range hopping operators with both random   potentials \cite{SS89, W91, Kle93, G93, AM93, JM99, Shi21,JS22, DMR24} and  quasi-periodic potentials \cite{JK16, JLS20, Shi22, Liu22,  SW22, Shi23, SW23, SW24} has attracted great attention over the years.   In this paper, we study  the following long-range random operator 
\begin{align}\label{y1}
    H_{\omega}=\varepsilon \Gamma_{\phi}+V_{\omega}(\bm x)\delta_{\bm x,\bm y}, \  \bm x,\bm y\in\Z^d,
\end{align}
where $\varepsilon>0$ represents the inverse of the disorder strength,  and  $\Gamma_{\phi}$ is a  translation invariant operator satisfying for $\forall \bm x,\bm y\in \mathbb{Z}^d$ and some $\g>0$, $\rho>1$, 
\begin{align}
    \label{y2}\Gamma_{\phi}(\bm x,\bm y)&=\phi(\bm x-\bm y)=\overline{\phi(\bm y-\bm x)},\\
    \label{y2-d} |\phi(\bm x)|&\le e^{-\gamma \log^{\rho}(\|\bm x\|+1)},\ \|\bm x\|=\max_{1\leq i\leq d}|x_i|.
\end{align}
For the  diagonal part,  we   let $\{V_{\omega}(\bm x)\}_{\bm x\in\Z^d}$ be a sequence of independent and identically distributed ({\it i.i.d})  random variables with a  common distribution $\mu$  on some  probability space $(\Omega,\mathcal{F},\mathbb{P})$ ($\mathcal{F}$ a $\sigma$-algebra on $\Omega$ and $\mathbb{P}$ a probability measure on $(\Omega,\mathcal{F})$).
We assume  that $\mu$ is H\"older continuous (cf. Definition \ref{holder} in the following for details). %, i.e.,  for any  $E\in\mathbb{R}$ and $\delta>0$,  
%\begin{align}\label{y0}
 %  \mu([E-\delta,E+\delta]):=\mathbb{P}\{\omega: \ V_{\omega}(\bm 0)\in [E-\delta,E+\delta]\}\le C\delta^{\lambda}
%\end{align}
%for some $\lambda\in (0,1], C>0.$
Note  that the operator $\Gamma_\phi$ satisfying  \eqref{y2-d} has previously been used in the construction of almost-periodic solutions for some nonlinear Hamiltonian equations \cite{Pos90}, and was  recently  introduced  by Shi-Wen \cite{SW22}  to the study of the localization for monotone quasi-periodic operators on $\Z^d$ via the KAM diagonalization approach. We also mention that the existence of  localized  eigenfunctions whose decay rate is the same as \eqref{y2-d}  has been established   in  \cite{ca17}.  The present work aims to prove via the MSA scheme that, for sufficiently small $\varepsilon$ and $\mathbb P$ a.e. $\omega$,   $H_\omega$ (given by \eqref{y1})  has  pure point spectrum with localized eigenfunctions whose decay rate is the same as \eqref{y2-d}.

The main motivations for investigating the long-range  model  \eqref{y1}  are twofold. The first one comes from  resolving  the {\bf conjecture}  of  \cite{YO87}:      {\it for $d=1$,  if the hopping term $|\phi(\bm x)|$  decays more slowly than any exponential function, but faster than $\frac{1}{\|\bm x\|}$, then such model  has pure point spectrum with  localized eigenstates whose decay rate is the same as the hopping term. Moreover, the result should be universal and  is  independent of particular configurations of the disorder}. This conjecture was tested \cite{YO87} based on numerical studies for the case $\phi(\bm x)=e^{-\sqrt{\|\bm x\|}}$. This paper tries to give an affirmative but partial answer to this conjecture in the more general  $d$-dimensional case from the mathematical perspective.  The second one lies in that we want to extend the remarkable MSA approach of Fr\"ohlich-Spencer \cite{FS83}   (cf. \cite{DK89} for the significant simplification of  this method) from the finite-range hopping to a slower long-range (e.g.,  the   \eqref{y2-d})   one.  In this respect,   Shi \cite{Shi21} has previously  introduced a novel MSA type approach to prove the spectral localization for random operators with power-law long-range hopping.

Our approach is mainly based on a MSA type Green's function estimates in the spirit of \cite{DK89, Kle93}.  However, the presence of the  weight \eqref{y2-d} leads to the technical challenge:  the function $f(\bm x, \bm y)=\log^{\rho}(\|\bm x-\bm y\|+1)$ ($\rho>1$) is {\it not a metric}.  Similar  issue also appears in the study of localization for  quasi-periodic operators with long-range hopping \eqref{y2} \cite{SW22}. Fortunately,  it was proved in \cite{SW22} that  the function  $f(\bm x, \bm y)$ is a {\it quasi-metric}  (cf. \eqref{qua} in the following  for details),  %in the sense of 
%\begin{align}\label{qme}
%&\log^{\rho}(\|\bm x+\bm y\|+1)\leq \log^{\rho}(\|\bm x\|+1)+\log^{\rho}(\|\bm y\|+1)+\rho^\rho, 
%\end{align}
which suffices for the  KAM diagonalization.  %In our MSA scheme, we have to iterate the resolvent identity for many steps (say $n$- steps with $n>1$) to get the off-diagonal decay on Green's functions.  If one repeats the  use  of \eqref{qme}, then 
%\begin{align}\label{qme1}
%\log^{\rho}(\|\sum_{i=1}^n\bm x_i\|+1)\leq \sum_{i=1}^n\log^{\rho}(\|\bm x_i\|+1)+(n-1)\rho^\rho.
%\end{align}
% which  is inadequate for the  MSA  iteration  since  the extra factor $(n-1)\rho^\rho$.   In this paper, rather than iterating \eqref{qme}, we use a direct method to establish (cf. Lemma \ref{vv1} in the following for details)
%\begin{align}\label{qme2}
%\log^{\rho}(\|\sum_{i=1}^n\bm x_i\|+1)\leq \sum_{i=1}^n\log^{\rho}(\|\bm x_i\|+1)+C(\rho) \log^{\rho} n, C(\rho)>0.
%\end{align}
In this paper,  we will prove  that  the {\it quasi-metric} estimate  is  also sufficient for performing a MSA scheme, which is one of  main contributions here.  As mentioned above, if $\rho=1$, then the long-range hopping \eqref{y2} becomes  a power-law one,  and certain  {\it tame}  estimate  is required in this type of  MSA \cite{Shi21}.  

If the distribution $\mu$ is absolutely continuous, both power-law localization and (sub)exponential localization can be established via the FMM  \cite{AM93}.  We believe the FMM can also deal with the hopping satisfying \eqref{y2-d} once $\mu$ is absolutely continuous.  Indeed, the Green's function estimates via the FMM  \cite{AM93,ASFH01} need  
a mild condition (such as the H\"older continuity) on the  distribution. However, proving the localization via FMM typically builds on the Simon-Wolff  criterion \cite{SW86}, which requires the distribution to be absolutely continuous.  In contrast, the MSA method  can be largely improved to treat completely singular distributions, such as the Bernoulli  ones  (cf. e.g., \cite{Bou04, BK05, GK13, DS20, LZ22} for  important works concerning Bernoulli type distributions).  In the present, we cannot handle the Bernoulli potentials, since our approach relies essentially on the priori Wegner estimate (this  depends on the H\"older continuity of the distribution).

The paper is organized as follows.  In \S \ref{not}, we introduce some useful notations.  The \S\ref{mthm} contains our main results on Green's function estimates (cf. Theorem \ref{thm1}) and the  localization (cf. Theorem \ref{thm2}). In \S\ref{tis}, we verify that Theorem \ref{thm1} holds for the initial step. In \S\ref{wg}, we  introduce the  Wegner  estimate.  In \S \ref{ITthm}, we present an iteration theorem. The proof of Theorem \ref{thm1}  is given in \S\ref{sVP2}--\S\ref{pot1}. In \S\ref{pot2}, we prove  Theorem \ref{thm2} via combining Theorem \ref{thm1} and the Shnol's  theorem. Some technical proofs are presented in  the Appendix. 

\section{The Notations}\label{not}
\begin{itemize}
	\item For $\bm n\in\R^d$, let
	%\begin{align*}
		$\|\bm n\|=\max_{1\le i\le d}|n_i|.$
	%\end{align*}
	Denote by $\dist(\cdot,\cdot)$ the distance induced by  $\|\cdot\|$. Define for $\Lambda\subset\R^d,$
	\begin{align*}
		\diam(\lg)=\sup_{\bm k,\bm k'\in\lg}\|\bm k-\bm k'\|.
	\end{align*}
	\item For $\bm x\in\Z^d$ and $L>0$, define
	\begin{align}
		\label{y3} B_L(\bm x)&=\{\bm y\in\mathbb{Z}^d:\ \|\bm y-\bm x\|\le L\}.
	\end{align}
	Moreover, write $B_L=B_L(\bm 0)$.
	\item Let $\{\delta_{\bm x}\}_{\bm x\in\Z^d}$ denote  the standard basis of $\ell^2(\Z^d)$.
	\item Denote by $\ji\cdot,\cdot\jd$ the standard inner product on $\ell^2(\Z^d)$.
	\item Denote by  $\|\cdot\|_2$  the standard operator norm on $\ell^2(\mathbb{Z}^d)$.
	\item For $B\subset \Z^d$, denote by $R_B$  the standard  restriction operator.
	\item For $B\subset\Z^d$, define $H_B=R_B H_{\omega} R_B$. The spectrum of $H_B$ is denoted by $\sigma(H_B)$.  For $E\notin \sigma(H_B)$,  the Green's function  is defined by $\mathcal{G}_{B}^E=(H_B-E)^{-1}$.  Moreover, let 
	\begin{align}\label{y4}
		\mathcal{G}_{B}^E(\bm x,\bm y)=\ji\delta_{\bm x},(H_B-E)^{-1}\delta_{\bm y}\jd,\ \bm x,\bm y \in B.
	\end{align}
	\item $[x]$ denotes the integer part of $x\in\R$.
	\item $\# A$ denotes the cardinality of a finite set $A$.
\end{itemize}

\section{Main results}\label{mthm}
In this section, we  state  our main results of the present work. 

Recall that  our model $H_\omega$ is given by \eqref{y1} with  $\Gamma_\phi$
 satisfying \eqref{y2} and \eqref{y2-d}.

%We begin with some Here is the set-up for our main results.

%\subsection{Green's function estimates}
%We present the first main result on Green's function estimates. 
We first introduce some  useful definitions. 

\begin{defn}\label{defENR}
Let $E\in\mathbb{R}$ and {$1<\rho'<\rho<\rho'+1$}. We call a cube $B_L(\bm x)$ non-resonant with respect to $E$ ($E$-NR for short) if
\begin{align}\label{y5}
      \dist(E,\sigma(H_{B_L(\bm x)}))\ge e^{-\log ^{\rho'}L}.  
    \end{align} 
    Otherwise, we call $B_L(\bm x)$ resonant with respect to $E$ ($E$-R for short).
\end{defn}
    \begin{rem}
If $B_L(\bm x)$ is $E$-NR, we have
\begin{align}\label{ENR}
    \|\mathcal{G}_{B_L(\bm x)}^E\|_2=\frac{1}{\dist(E,\sigma(H_{B_L(\bm x)})}\le e^{\log^{\rho'}L}.
\end{align}
    \end{rem}

\begin{defn}\label{defkeg}
     Let $\kappa>0$. We say that a cube $B_L(\bm x)$ is $(\kappa,E)$-good, if it is $E$-NR and fulfills
    \begin{align}\label{y6}
        |\mathcal{G}_{B_L(\bm x)}^E(\bm x',\bm x'')|\le e^{-\kappa\log^{\rho}(\|\bm x'-\bm x''\|+1)}\ \text{for $\|\bm x'-\bm x''\|\ge L^{\frac{4}{5}}$}.
    \end{align}
   Otherwise, we say that $B_L(\bm x)$ is $(\kappa,E)$-bad. We say that $B_L(\bm x)$ is a $(\kappa,E)$-good (resp. $(\kappa,E)$-bad) $L$-cube if it is $(\kappa,E)$-good (resp. $(\kappa,E)$-bad).
\end{defn}

\begin{defn}[cf. \cite{CKM87}]\label{holder}
	We say that the distribution $\mu$ is H\"older continuous of order $\lambda>0$, provided that 
	\begin{align}\label{vv22}
		\frac{1}{\mathcal{D}_{\lambda}(\mu)}:=\inf\limits_{\beta>0}\sup\limits_{0<|b-a|\le\beta}\mu([a,b])|b-a|^{-\lambda}<\infty.
	\end{align}
	Denote by $\mathscr{H}(\lambda)$ the  set  of all distributions which are H\"older continuous of order $\lambda>0$.
\end{defn}
\begin{rem}
	If $\mu\in\mathscr{H}(\lambda)$, then for each $\beta\in(0,\mathcal{D}_{\lambda}(\mu))$, we can find some $\beta_0=\beta_0(\mu,\beta)>0$ such that
	\begin{equation}\label{vv23}
		\mu([a,b])\le \beta^{-1}|a-b|^{\lambda}\  {\rm for}\ 0\le b-a\le\beta_0.
	\end{equation}
	
\end{rem}

 Throughout this paper, we assume that $$1<\rho'<\rho<\rho'+1,\  \kappa_0\in\left(0,\frac{\g}{5}\right],\ \kappa_{\infty}\in(0,\kappa_0), \ p>5d,\ \alpha\in\left(\frac{5}{4},\frac{2p}{p+2d}\right).$$

The main result on Green's function estimates  is

\begin{thm}\label{thm1}
	Let $\mu\in\mathscr{H}(\lambda)$ (i.e.,  $\mathcal{D}_{\lambda}(\mu)>0$), $E_0\in\R$, $L_0\in \N$ and $L_{s+1}=[L_s^{\alpha}]$ ($s\geq 0$). Then for $0<\beta<\mathcal{D}_{\lambda}(\mu)$, there exists
\begin{align*}
	\underline{L}_0=\underline{L}_0(\lambda,\mu, \beta,d,p,\g,\rho,\rho',\alpha,\kappa_0,\kappa_{\infty})>0
\end{align*}
such that the following holds true.   For $L_0\ge\underline{L}_0$, there are $\ep_0=\ep_0(\lambda,\mu,\beta,d,\g,\rho, L_0)>0$ and $\et=\et(\lambda,\mu,\beta,d,L_0)>0$ so that  if $0<\ep<\ep_0$ and $s\ge0$,  then we have for all $\|\bm x-\bm y\|>2L_s$, 
\begin{align*}
	\Po\{\exists E\in[E_0-\et,E_0+\et]\text{ s.t., both $B_{L_s}(\bm x)$ and $B_{L_s}(\bm y)$ are $(\kappa_{\infty},E)$-{\rm bad}}\}\le L_s^{-2p}. 
\end{align*}
\end{thm}

%\subsection{Localization}
As an  application of the above Green’s function estimates,  we have
\begin{thm}\label{thm2}
	Let $H_{\omega}$ be defined by \eqref{y1} with the common distribution $\mu\in\mathscr{H}(\lambda)$. Fix $\beta\in (0,\mathcal{D}_{\lambda}(\mu))$.
	Then there exist $\varepsilon_0=\varepsilon_0(\lambda,\mu,\beta, d,p,\g,\rho,\rho',\alpha,\kappa_0,\kappa_{\infty})>0$ and $\et=\et(\lambda,\mu,\beta, d,p,\g,\rho,\rho',\alpha,\kappa_0,\kappa_{\infty})>0$ such that for $0<\varepsilon<\varepsilon_0$, $H_\omega$ has pure point spectrum in $[E_0-\et,E_0+\et]$ for $\forall E_0\in\R$ and $\mathbb{P}$ almost every $\omega\in\Omega$. Moreover, for $\mathbb{P}$ almost every $\omega\in\Omega$, 
	there exists a complete system of eigenfunctions $\psi_\omega=\{\psi_{\omega}(\bm x)\}_{\bm x\in{\Z}^d}$ satisfying  
	\begin{align}\label{ww1}
		|\psi_\omega(\bm x)|\le e^{-\frac{\kappa_{\infty}}{2\alpha^{\rho}}\log^{\rho}(1+\|\bm x\|)}\ {\rm for}\  \|\bm x\|\gg 1. 
	\end{align}
	
\end{thm}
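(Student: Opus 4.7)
The plan is to combine the iterated MSA Green's function estimates from Theorem \ref{thm1} with a Shnol-type eigenfunction expansion. First I would verify the initial input \textbf{(P1)}, \textbf{(P2)} at some scale $L_0 = L_0(\varepsilon)$. The Hölder hypothesis \eqref{y0} yields the Wegner-type bound $\mathbb{P}(\dist(E, \sigma(H_{B_L(\bm x)})) \le \delta) \le C \delta^{\lambda} |B_L(\bm x)|$; together with independence of $\{V_\omega\}$ on disjoint boxes this controls the double-resonance probability in \textbf{(P2)}. For \textbf{(P1)}, when $\varepsilon$ is small, a Neumann expansion of $(\varepsilon \Gamma_{\phi} + V_\omega - E)^{-1}$ around $(V_\omega - E)^{-1}$ on a Wegner-good box delivers \eqref{y6} at rate $\kappa_0 \le \gamma/5$. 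Smallness of $\varepsilon$ enters both to make the series converge and to absorb the $C(\rho)\log^{\rho}$ loss from Lemma \ref{qua} against $|\phi|$.

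Second, I set $L_{k+1} = [L_k^{\alpha}]$ and iterate Theorem \ref{thm1} through all scales. The rates $\kappa_k$ decrease monotonically but stay bounded below by $\kappa_{\infty}$ because the corrections in \eqref{kappa'} form a summable series (using $\rho > \rho'$ and $\alpha > 4/5$). At each scale we obtain
\begin{align*}
\mathbb{P}\{\exists E \in I: B_{L_k}(\bm x), B_{L_k}(\bm y) \text{ both } (\kappa_k, E)\text{-bad}\} \le L_k^{-2p}
\end{align*}
for $\|\bm x - \bm y\| \ge 2 L_k$. Summing over $\bm y$ in a box of side $\sim L_{k+1}$ around a base point, and over $k$ (where $2p > 10d$ provides ample summability against $L_{k+1}^d$), Borel--Cantelli produces a full-measure event $\Omega^\star$ such that for $\omega \in \Omega^\star$, every $E \in I$ and every sufficiently large $\|\bm x\|$, at least one of $B_{L_k}(\bm 0)$ and $B_{L_k}(\bm x)$ is $(\kappa_k, E)$-good at the scale $L_k$ closest to $\|\bm x\|^{1/\alpha}$.

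Third, for any polynomially bounded generalized eigenfunction $H_\omega \psi = E \psi$ with $\psi(\bm 0) \ne 0$, the Poisson-type identity
\begin{align*}
\psi(\bm u) = -\varepsilon \sum_{\bm y \in B,\ \bm y' \notin B} \mathcal{G}_{B}^{E}(\bm u, \bm y)\, \phi(\bm y - \bm y')\, \psi(\bm y')
\end{align*}
applied in a $(\kappa_k, E)$-good box $B = B_{L_k}(\bm x)$ with $\bm u = \bm x$ gives, after Lemma \ref{qua} combines the two $\log^\rho$ decays, a bound $|\psi(\bm x)| \le C \langle \bm x \rangle^{N} e^{-\tilde{\kappa}_k \log^\rho(\|\bm x\|+1)}$. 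Choosing $L_k \sim \|\bm x\|^{1/\alpha}$ so that $\log^\rho L_k \sim \alpha^{-\rho} \log^\rho \|\bm x\|$, and absorbing both the polynomial prefactor and the loss from a second application of the Poisson identity (needed to reach into the $B_{L_k}^{out}$ annulus), delivers the rate $\kappa_{\infty}/(2 \alpha^\rho)$ in \eqref{ww1}. Bourgain--Klein's refinement of Shnol's theorem guarantees that the spectrum is supported on generalized eigenvalues with polynomially bounded eigenfunctions, so the decay upgrades these to $\ell^2$ eigenfunctions, yielding pure point spectrum with a complete basis.

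The main obstacle is the long-range boundary sum: unlike finite-hopping MSA, the Poisson identity sums over the \emph{entire} exterior of the box, so a naive bound would destroy the $\log^\rho$ decay. The remedy is to invoke the quasi-metric property
\begin{align*}
\log^{\rho}(\|\bm x - \bm y'\|+1) \le \log^{\rho}(\|\bm x - \bm y\|+1) + \log^{\rho}(\|\bm y - \bm y'\|+1) + C(\rho) \log^{\rho} 2
\end{align*}
to split the exponent between the Green's function decay $\kappa_k$ and the hopping decay $\gamma$; the additive loss $C(\rho) \log^\rho 2$ is absorbed into the correction term already built into $\kappa'$ in \eqref{kappa'}. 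Summability of $\sum_{\bm y'} e^{-(\gamma - \kappa_k) \log^{\rho}(\|\bm y - \bm y'\|+1)}$ is ensured by $\kappa_0 \le \gamma/5 < \gamma$, which is exactly why the constraint on $\kappa_0$ appears in the notational conventions.
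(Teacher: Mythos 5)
Your proposal follows the same road map as the paper: an initial-scale Wegner/Neumann estimate to launch \textbf{(P1)} and \textbf{(P2)}, iteration of Theorem \ref{thm1} with a summable correction so that $\kappa_s\ge\kappa_\infty$, a Borel--Cantelli argument over annuli, a Poisson/geometric-resolvent expansion combined with the quasi-metric inequality, and finally a Shnol-type argument to upgrade generalized eigenfunctions. The scaling $\log^\rho L_s\sim\alpha^{-\rho}\log^\rho\|\bm x\|$ that produces the exponent $\kappa_\infty/(2\alpha^\rho)$ is also exactly the one used in the paper.

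There is, however, one link in the chain that you leave implicit and that the paper must spell out: after Borel--Cantelli you only know that, for $\omega$ in a full-measure set and for all large $s$, \emph{at least one} of $B_{L_s}(\bm 0)$ and $B_{L_s}(\bm x)$ is $(\kappa_\infty,E)$-good, for $\bm x$ in a suitable annulus. You then apply the Poisson identity in a good box $B_{L_s}(\bm x)$ around the far point. To justify that this is the good one, you need the separate deterministic argument that for any generalized eigenvalue $E$ with $\psi(\bm 0)=1$, the centered box $B_{L_s}(\bm 0)$ can be $(\kappa_\infty,E)$-good for only finitely many $s$: applying the Poisson identity at $\bm 0$ in a good $B_{L_s}(\bm 0)$, splitting the interior sum according to whether $\|\bm x'\|\le L_s^{4/5}$ (use the $E$-NR bound $e^{\log^{\rho'}L_s}$) or $L_s^{4/5}<\|\bm x'\|\le L_s$ (use the good decay), one finds $|\psi(\bm 0)|\to 0$ as $s\to\infty$, contradicting $\psi(\bm 0)=1$. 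This forces $B_{L_s}(\bm 0)$ to be bad for large $s$ and hence $B_{L_s}(\bm x)$ to be good, which is what your third step requires. Your remark about ``a second application of the Poisson identity needed to reach into $B_{L_k}^{out}$'' is also slightly off: only one Poisson identity is used per point, and the annulus $B_{L_s}^{out}$ is handled by splitting the interior sum into $\|\bm x-\bm x'\|\le L_s^{4/5}$ (non-resonance bound) versus $\|\bm x-\bm x'\|>L_s^{4/5}$ (the $\kappa_\infty$ decay), with the extra factor of $2$ in the final exponent absorbing the polynomial volume factors and the $C(\rho)\log^\rho 2$ loss from Lemma \ref{qua}, not a second resolvent expansion.
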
 
\begin{rem}
 This theorem  requires  the smallness  condition of $\ep_0$,  which depends sensitively on $\lambda,\beta$ (cf. \eqref{et} and \eqref{ep}  in the following for details).  In the proof, we take $\et=4^{-1}(\beta^{-1}\underline{L}_0^p(2\underline{L}_0+1)^d)^{-\frac{1}{\lambda}}$,  which also depends  on $\lambda,\beta$ (thus on $\mu$), where $\underline{L}_0$ is defined in Theorem \ref{thm1}. Since both $\ep_0$ and $\eta$ do  not depend on $E_0$, we can prove indeed that $H_\omega$ has pure point spectrum on $\R$ for $\mathbb{P}$ almost every $\omega\in\Omega$ by  covering  $\R$ with  intervals of length $\et$. 
\end{rem}

\section{The initial step}\label{tis}
In this section, we will prove that the conclusion of Theorem \ref{thm1} holds true for $s=0$ since $0<\ep\ll1$ and $\mu\in\mathscr{H}(\lambda)$.

The following lemma is useful for dealing with matrices with slowly decaying off-diagonal elements.
\begin{lem}\label{qua}
	For $x_i\ge0$, $1\le i\le n$, we have 
	\begin{align}\label{quaeq}
		\log^{\rho}(1+\sum\limits_{i=1}^n x_i)\le \sum\limits_{i=1}^n \log^{\rho}(1+x_i)+C(\rho)\log^{\rho}n,
	\end{align}
	where $C(\rho)>0$   is some constant depending  only on $\rho>0$.
\end{lem}
\begin{proof}
	For a detailed proof, we refer to the Appendix \ref{APPqua}.
\end{proof}
\begin{rem}
	We have the {\it quasi-metric} property: for any $\bm x_i\in\Z^d$ ($1\leq i\leq n$),   
	\begin{align*}
		\log^{\rho}(\|\sum_{i=1}^n\bm x_i\|+1)\leq \sum_{i=1}^n\log^{\rho}(\|\bm x_i\|+1)+C(\rho) \log^{\rho} n.
	\end{align*}
	%which is similar to the triangle inequality.
\end{rem}
We have 
\begin{thm}\label{P1}
	Let $\mu\in\mathscr{H}(\lambda)$. Fix $0<\beta<\mathcal{D}_{\lambda}(\mu)$ and  $E_0\in\R$. Then there exists
	\begin{align*}
		\underline{L}_0=\underline{L}_0(\lambda,\mu, \beta,d,p,\g,\rho,\rho',\kappa_0)>0
	\end{align*}
	such that the following holds true.  If $L_0\ge\underline{L}_0$, then there are $\ep_0=\ep_0(\lambda,\mu, \beta,d,p,\g,\rho, L_0)>0$ and $\et=\et(\lambda,\mu,\beta, d,p,L_0)>0$ so that if $0<\ep<\ep_0$,  then we have for all $\|\bm x-\bm y\|>2L_0,$
	\begin{align*}
		\Po\{\exists E\in[E_0-\et,E_0+\et]\text{ s.t.,  both $B_{L_0}(\bm x)$ and $B_{L_0}(\bm y)$ are $(\kappa_0,E)$-{\rm bad}}\}\le L_0^{-2p}.
	\end{align*}
\end{thm}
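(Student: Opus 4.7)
The plan is to reduce the double-cube estimate to a single-cube bound and then exploit independence. Since $\|\bm x-\bm y\|>2L_0$, the cubes $B_{L_0}(\bm x)$ and $B_{L_0}(\bm y)$ are disjoint, so the events
\begin{align*}
	\Omega_i=\{\omega:\exists E\in [E_0-\eta,E_0+\eta]\text{ s.t. } B_{L_0}(\bm x_i)\text{ is }(\kappa,E)\text{-bad}\},\qquad i=1,2,
\end{align*}
(with $\bm x_1=\bm x$, $\bm x_2=\bm y$) are independent and identically distributed. The event appearing in the theorem is contained in $\Omega_1\cap\Omega_2$, so it suffices to show $\Po(\Omega_1)\le L_0^{-p}$; independence then closes the argument.

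The deterministic ingredient I would prove is a \emph{diagonally dominated} criterion: there exist $c_0=c_0(\phi,\g,\rho,d)>0$ and $\underline{L}_1=\underline{L}_1(\rho,\rho',\g,d,\kappa)$ such that if $L_0\ge\underline{L}_1$, $e^{-\log^{\rho'}L_0}\le\eta_0\le 1$, $\ep/\eta_0\le c_0$, and $\min_{\bm z\in B_{L_0}(\bm x)}|V_\omega(\bm z)-E|\ge\eta_0$, then $B_{L_0}(\bm x)$ is $(\kappa,E)$-good. To see this, write $H_{B_{L_0}(\bm x)}-E=D+\ep T$ with $D=(V-E)|_{B_{L_0}(\bm x)}$ diagonal and $T=\Gamma_\phi|_{B_{L_0}(\bm x)}$. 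The hypothesis gives $\|D^{-1}\|_2\le\eta_0^{-1}$, so the Neumann series
\begin{align*}
	\mathcal{G}^E_{B_{L_0}(\bm x)}=\sum_{k\ge 0}(-\ep)^k(D^{-1}T)^k D^{-1}
\end{align*}
converges. To estimate the $k$-th off-diagonal matrix element, I would split $\g=\g_1+\g_2$ with $\g_1,\g_2>0$ and $\g_2\in(\kappa,\g)$, and write $|\phi(\bm z)|\le\psi(\bm z)\cdot e^{-\g_2\log^\rho(\|\bm z\|+1)}$, where $\psi(\bm z)=e^{-\g_1\log^\rho(\|\bm z\|+1)}\in\ell^1(\Z^d)$ (since $\rho>1$). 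For any path $\bm x=\bm z_0,\ldots,\bm z_k=\bm y$, Lemma~\ref{qua} yields
\begin{align*}
	\sum_{j=0}^{k-1}\log^\rho(\|\bm z_j-\bm z_{j+1}\|+1)\ge\log^\rho(\|\bm x-\bm y\|+1)-C(\rho)\log^\rho k,
\end{align*}
so the $k$-th term is bounded by $\eta_0^{-(k+1)}(\ep\|\psi\|_{\ell^1})^k e^{\g_2C(\rho)\log^\rho k}e^{-\g_2\log^\rho(\|\bm x-\bm y\|+1)}$. Taking $c_0$ small enough that the subexponential factor is absorbed, summation over $k$ gives
\begin{align*}
	|\mathcal{G}^E_{B_{L_0}(\bm x)}(\bm x,\bm y)|\le 2\eta_0^{-1}e^{-\g_2\log^\rho(\|\bm x-\bm y\|+1)}.
\end{align*}
Since $\bm y\in B_{L_0}^{out}(\bm x)$ forces $\log^\rho(\|\bm x-\bm y\|+1)\ge(4/5)^\rho\log^\rho L_0$ and $\g_2>\kappa$, the prefactor $2\eta_0^{-1}$ is absorbed for large $L_0$, yielding the decay \eqref{y6}; the $E$-NR condition follows from $\dist(E,\sigma(H_{B_{L_0}(\bm x)}))\ge\eta_0-\ep\|\Gamma_\phi\|_2\ge\eta_0/2\ge e^{-\log^{\rho'}L_0}$.

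Given the criterion, I would set $\eta_0=\eta=L_0^{-q}$ with $q>(p+d)/\l$ fixed and $\ep_0=c_0 L_0^{-q}$. Then for $0<\ep<\ep_0$ the criterion is applicable, and
\begin{align*}
	\Omega_1\subset\{\omega:\exists\bm z\in B_{L_0}(\bm x),\ |V_\omega(\bm z)-E_0|<\eta+\eta_0\}.
\end{align*}
The H\"older continuity bound \eqref{vv23} combined with a union bound then gives
\begin{align*}
	\Po(\Omega_1)\le(2L_0+1)^d\cdot\b^{-1}(2(\eta+\eta_0))^\l\le C(\b,d)L_0^{d-q\l}\le L_0^{-p}
\end{align*}
for $L_0\ge\underline{L}_0$ sufficiently large, from which $\Po(\Omega_1\cap\Omega_2)\le L_0^{-2p}$ follows by independence.

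The main obstacle I anticipate is the deterministic step, specifically the fact that the quasi-triangle loss $C(\rho)\log^\rho k$ in Lemma~\ref{qua} grows faster than any polynomial in $k$. Absorbing it into the Neumann expansion forces $\ep/\eta_0$ to be so small that the geometric factor $(\ep\|\psi\|_{\ell^1}/\eta_0)^k$ dominates $e^{\g_2C(\rho)\log^\rho k}$ for all $k\ge 1$, which is precisely why $\ep_0$ must be chosen depending on $L_0$ through $\eta_0$. A secondary subtlety is that $\eta_0$ must be polynomially small in $L_0$ (to match the $L_0^d$ loss from the union bound) yet much larger than $e^{-\log^{\rho'}L_0}$ so that $E$-NR survives the perturbation; this balance is automatic because $\rho'>1$.
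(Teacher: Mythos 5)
Your proposal is correct and follows essentially the same route as the paper: reduce to a single-cube estimate via independence, exclude the event that some $V_\omega(\bm z)$ with $\bm z\in B_{L_0}(\bm x)$ falls within a polynomially small window of $E_0$ (via H\"older continuity and a union bound), and on the complement use a Neumann expansion around the diagonal part together with the quasi-metric inequality of Lemma~\ref{qua} to establish that the cube is good with both the $E$-NR bound and the off-diagonal decay. Your $\gamma=\gamma_1+\gamma_2$ splitting to gain $\ell^1$-summability of the weight is a minor technical variant of the paper's cruder volume count $(2L_0+1)^{dn}$ in the path sum, which the paper compensates for by taking $\ep_0\lesssim\zeta^2(2L_0+1)^{-d}$; otherwise the arguments coincide.
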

\begin{proof}
	The proof is based an  application  of the Neumann series expansion argument. 
	Define 
	\begin{align}\label{et}
		\zeta= 2^{-1}\left(\beta^{-1}L_0^{p}(2L_0+1)^d\right)^{-\frac{1}{\lambda}},\ \et=\frac{\zeta}{2}.
	\end{align}
Take $L_0\ge\underline{L}_0\gg1$ so that  $\zeta<\frac{1}{2}$ and $2\zeta\le \beta_0$, where $\beta_0=\beta_0(\mu,\beta)$ is defined in \eqref{vv23}. Define the event
	\begin{align*}
		\textbf{R}_{\bm x}(\zeta):\ \exists\bm z\in B_{L_0}(\bm x)\ {\rm s.t.}, \ |V_{\omega}(\bm z)-E_0|\le \zeta.
	\end{align*}
	From \eqref{vv23}, \eqref{et} and $2\zeta\le \beta_0$,  it follows that 
	\begin{align}\label{L0-p}
		\nonumber\Po(\textbf{R}_{\bm x}(\zeta))&\le (2L_0+1)^d\mu([E_0-\zeta,E_0+\zeta])\\
		\nonumber&\le 2^{\lambda}(2L_0+1)^d\beta^{-1}\zeta^{\lambda}\\
		&\le L_0^{-p}.
	\end{align}
	
	Next, suppose that  $\omega\notin\textbf{R}_{\bm x}(\zeta)$. Then   for $\forall\bm m\in B_{L_0}(\bm x)$ and $\forall |E-E_0|\le \et$, we get
	\begin{align*}
		|V_\omega(\bm m)-E|\ge |V_\omega(\bm m)-E_0|-|E-E_0|\ge \frac{\zeta}{2}.
	\end{align*}
	This implies  that 
	$\|\mathcal{Q}^{-1}\|_2\le \frac{2}{\zeta}$,
	where
	$$\mathcal{Q}=R_{B_{L_0}(\bm x)}(V_{\omega}(\bm m)\delta_{\bm m,\bm n}-E)R_{B_{L_0}(\bm x)}.$$ 
	%Since \eqref{a4}, we have 
	Note that 
	\begin{align*}
		H_{B_{L_0}(\bm x)}-E=\ep R_{B_{L_0}(\bm x)}\Gamma_{\phi}R_{B_{L_0}(\bm x)}+\mathcal{Q}.
	\end{align*}
	Let 
	\begin{align}\label{ep}
		\ep_0=\ep_0(\lambda,\mu, \beta,d,p,\g,\rho, L_0)=\min\left(\frac{\zeta}{4(\|\Gamma_{\phi}\|_{2}+1)},\frac{\zeta^2}{2(2L_0+1)^d}\right),
	\end{align}
	and assume $0<\ep< \ep_0$.  Then 
	$$ \|\varepsilon \mathcal{Q}^{-1}R_{B_{L_0}(\bm x)}\Gamma_{\phi}R_{B_{L_0}(\bm x)}\|_2\leq \frac12. $$
From  $L_0\ge\underline{L}_0\gg1$ and  the Neumann series expansion argument, 
		it follows that %we have
	\begin{align}\label{a6}
		\|\mathcal{G}^{E}_{B_{L_0}(\bm x)}\|_2\le 2\|\mathcal{Q}^{-1}\|_2\le 8\left(\frac{L_0^{p}(2L_0+1)^d}{\beta}\right)^{\frac{1}{\lambda}}\le e^{\log^{\rho'}L_0}. 
	\end{align}
	Moreover,   we have  for $\|\bm x'-\bm x''\|\ge L_0^{\frac{4}{5}}$, 
	\begin{align*}
		|\mathcal{G}^{E}_{B_{L_0}(\bm x)}(\bm x',\bm x'')|&\le \sum_{n=1}^{\infty}\left|\ep^n\left(\mathcal{Q}^{-1}\left(\Gamma_{\phi}\mathcal{Q}^{-1}\right)^n\right)(\bm x',\bm x'')\right|\\
		&\le \ep|(\mathcal{Q}^{-1}\Gamma_{\phi}\mathcal{Q}^{-1})(\bm x',\bm x'')|\\
		&\ \ \ +\sum_{n=2}^{\infty}\left(2\zeta^{-1}(2\ep\zeta^{-1})^n\sum_{\bm k_1,\cdots,\bm k_{n-1}\in B_{L_0}(\bm x)}|\Gamma_{\phi}(\bm x',\bm k_1)|\cdots|\Gamma_{\phi}(\bm k_{n-1},\bm x'')|\right)\\
		&\le  4\ep\zeta^{-2}e^{-\gamma \log^{\rho}(1+\|\bm x'-\bm x''\|)}\\
		&\ \ \ +\sum_{n=2}^{\infty}\left(2\zeta^{-1}(2\ep\zeta^{-1}(2L_0+1)^d)^ne^{-\g\log^{\rho}(1+\|\bm x'-\bm x''\|)+\g C(\rho)\log^{\rho} n}\right)\\
		&\le e^{-\kappa_0\log^{\rho}(1+\|\bm x'-\bm x''\|)},
	\end{align*}
	where in the third inequality, we have used  \eqref{y2-d} and  \eqref{qua}. 
	Hence we have shown  that  $B_{L_0}(\bm x)$ is $(\kappa_0,E)$-good for $\omega\notin\textbf{R}_{\bm x}(\zeta)$, namely,  
	\begin{align}\label{807}
		\{\exists E\in[E_0-\et,E_0+\et]\ \text{s.t.,}\ B_{L_0}(\bm x)\ \text{is $(\kappa_0,E)$-bad}\}\subset \textbf{R}_{\bm x}(\zeta).
	\end{align}
	
	Finally, from \eqref{L0-p}, \eqref{807}, $\|\bm x-\bm y\|>2L_0$, $0<\ep<\ep_0$ and the $i.i.d$ assumption on the potentials, we have
	\begin{align*}
	&\ \ \mathbb{P}\{\exists E\in[E_0-\et,E_0+\et]\text{ s.t., both $B_{L_0}(\bm x)$ and $B_{L_0}(\bm y)$ are $(\kappa_0,E)$-bad}\}\\
	\le&\ \ \Po(\textbf{R}_{\bm x}(\zeta))\Po(\textbf{R}_{\bm y}(\zeta))\le  L_0^{-2p}.
	\end{align*}

This completes the proof. 
\end{proof}

\section{The Wegner Estimate}\label{wg}
In order to complete the proof of the Theorem \ref{thm1} via  the MSA induction, we also need the following  important Wegner  estimate. It  has  essentially been proven by   Carmona-Klein-Martinelli \cite{CKM87}, and the regularity property  of  the distribution $\mu$ plays a crucial role there.
\begin{lem}[Wegner estimate, cf. Theorem 6.2 of \cite{CKM87}]\label{x7-0}
	Let $\mu\in\mathscr{H}(\lambda)$ (i.e.,  $\mathcal{D}_{\lambda}(\mu)>0$). Then for any $0<\beta<\mathcal{D}_{\lambda}(\mu)$, we can find $\beta_0=\beta_0(\mu,\beta)>0$ such that
	\begin{align}\label{x8}
		\mathbb{P}\left\{\dist(E,\sigma(H_{B_L(\bm x)}))\le\epsilon\right\}
		\le \beta^{-1}2^{\lambda}(2L+1)^{d(1+\lambda)}\epsilon^{\lambda}
	\end{align} 
	for all $E\in\R$, $\bm x\in\Z^d$, $\epsilon>0$ and $L>0$ with $\epsilon(2L+1)^d\le\beta_0$.
\end{lem}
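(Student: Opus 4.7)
My plan is to prove this Wegner-type estimate via the argument of Carmona--Klein--Martinelli \cite{CKM87}, combining a Chebyshev reduction, the trace decomposition, and rank-one spectral averaging adapted to a H\"older measure $\mu$. The first step is the reduction
\[
\mathbb{P}\{\dist(E, \sigma(H_{B_L(\bm x)})) \le \epsilon\} \le \mathbb{P}\{N_I \ge 1\} \le \mathbb{E}[N_I],
\]
with $I = [E-\epsilon, E+\epsilon]$ and $N_I = \operatorname{tr}\chi_I(H_{B_L(\bm x)})$, followed by the standard trace decomposition
\[
\mathbb{E}[N_I] = \sum_{\bm m \in B_L(\bm x)} \mathbb{E}\langle\delta_{\bm m}, \chi_I(H_{B_L(\bm x)})\delta_{\bm m}\rangle,
\]
which accounts for the first factor $(2L+1)^d$ in the claimed bound.

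For each fixed $\bm m$, I would condition on $\{V_\omega(\bm n)\}_{\bm n \ne \bm m}$, so that $H_{B_L(\bm x)} = A_{\bm m} + V_\omega(\bm m) P_{\bm m}$ is a rank-one perturbation of the $V_\omega(\bm m)$-independent operator $A_{\bm m}$ by $P_{\bm m} = |\delta_{\bm m}\rangle\langle\delta_{\bm m}|$. Diagonalizing $A_{\bm m}+tP_{\bm m}$ gives monotone eigenvalue trajectories $\lambda_j(t)$ satisfying the Feynman--Hellmann identity $\lambda_j'(t) = |\langle\psi_j(t), \delta_{\bm m}\rangle|^2 =: w_j(t)$, together with the completeness relation $\sum_j w_j(t) = 1$. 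Each $\lambda_j$ enters and exits $I$ at most once by monotonicity, so $J_j := \{t: \lambda_j(t) \in I\}$ is a single $t$-interval with $|J_j|\cdot \inf_{J_j}w_j \le |I| = 2\epsilon$, and the hypothesis $\epsilon(2L+1)^d \le \beta_0$ forces $|J_j| \le \beta_0$ so that \eqref{vv23} applies.

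Applying \eqref{vv23} on each $J_j$, the conditional expectation can be bounded by
\[
\int \langle\delta_{\bm m}, \chi_I(A_{\bm m}+tP_{\bm m})\delta_{\bm m}\rangle\,d\mu(t) \le \sum_j \bigl(\sup_{J_j} w_j\bigr)\cdot \mu(J_j) \le \beta^{-1}(2\epsilon)^\lambda \sum_j w_j^{1-\lambda},
\]
where the last inequality comes from $|J_j|\le 2\epsilon/\inf_{J_j}w_j$ and absorbing $\sup_{J_j}w_j/\inf_{J_j}w_j$ by continuity along a short branch. Using $\sum_j w_j = 1$ and the power-mean inequality, $\sum_j w_j^{1-\lambda} \le N^\lambda = (2L+1)^{d\lambda}$, giving the per-site bound $\beta^{-1}2^\lambda(2L+1)^{d\lambda}\epsilon^\lambda$. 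Summing over $\bm m\in B_L(\bm x)$ then yields the claimed $\beta^{-1}2^\lambda(2L+1)^{d(1+\lambda)}\epsilon^\lambda$.

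The main obstacle is the H\"older spectral-averaging step: for absolutely continuous $\mu$ with bounded density the classical identity $\int \langle\delta_{\bm m}, \chi_I(A_{\bm m}+tP_{\bm m})\delta_{\bm m}\rangle\,dt = |I|$ closes the argument immediately, but for singular $\mu \in \mathscr{H}(\lambda)$ one must work branch-by-branch along the monotone eigenvalue trajectories and exploit the completeness relation $\sum_j w_j = 1$ via power-mean duality to generate the sharp factor $(2L+1)^{d\lambda}$ rather than a naive $(2L+1)^d$; this is precisely the mechanism of \cite{CKM87}.
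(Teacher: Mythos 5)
The paper gives no proof of this lemma --- it refers to Carmona--Klein--Martinelli \cite{CKM87} and omits all details --- so your attempt is really a reconstruction of the argument cited. The framework you set up (Chebyshev reduction, trace decomposition, conditioning on a single site, monotone eigenvalue branches $\lambda_j(t)$, and the completeness relation $\sum_j w_j(t)\equiv 1$) is indeed the right one, but the spectral-averaging step as written contains genuine gaps that compound rather than cancel.

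First, the deduction that $\epsilon(2L+1)^d\le\beta_0$ ``forces $|J_j|\le\beta_0$'' is false: you only know $|J_j|\cdot\inf_{J_j}w_j\le 2\epsilon$, so $|J_j|$ is uncontrolled whenever $\inf_{J_j}w_j$ is small; in fact $J_j$ is an \emph{unbounded} interval as soon as $\bar I$ contains an eigenvalue of $H_{B_L(\bm x)\setminus\{\bm m\}}$, since $\lambda_j(t)$ tends to finite limits as $t\to\pm\infty$. Thus \eqref{vv23} cannot be applied to $\mu(J_j)$ as you do. Second, ``absorbing $\sup_{J_j}w_j/\inf_{J_j}w_j$ by continuity along a short branch'' is not an estimate: the branch $J_j$ need not be short, and even on a short branch $w_j$ may vanish at one endpoint while being of order one at the other. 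Third, the power-mean step $\sum_j w_j^{1-\lambda}\le N^\lambda$ tacitly uses $\sum_j w_j\le 1$; but the numbers $w_j$ in that sum are values of $w_j(\cdot)$ at \emph{different} points $t_j\in J_j$ (one for each $j$), so the pointwise identity $\sum_j w_j(t)=1$ does not apply, and $\sum_j\sup_{J_j}w_j$ can be of order $N$. The ingredient that repairs all three in \cite{CKM87} is the eigenvector pigeonhole: if $\dist(E,\sigma(H_{B_L(\bm x)}))\le\epsilon$, then some normalized eigenvector $\psi_j$ with eigenvalue in $I$ satisfies $|\psi_j(\bm m_0)|^2\ge(2L+1)^{-d}$ at some site $\bm m_0\in B_L(\bm x)$. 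One sums over the $(2L+1)^d$ choices of $\bm m_0$ and then conditions on $\{V_\omega(\bm n)\}_{\bm n\ne\bm m_0}$; on the resulting event the derivative is bounded \emph{below} by $(2L+1)^{-d}$, so the relevant $t$-set genuinely lives in intervals of length $O(\epsilon(2L+1)^d)\le O(\beta_0)$ where \eqref{vv23} applies, and this is the source of the extra $(2L+1)^{d\lambda}$ per site, hence $(2L+1)^{d(1+\lambda)}$ overall. As it stands your argument does not close; note also that for the paper's application in Theorem~\ref{VP2} only the $\epsilon^\lambda$ decay matters and any polynomial-in-$L$ prefactor would suffice.
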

\begin{proof}
	Note  that the long-range hopping term $\varepsilon \Gamma_{\phi}$ in  our model  is non-random. Then the proof is   similar   to that in the Schr\"odinger case  investigated by Carmona-Klein-Martinelli \cite{CKM87}. We omit the details here.
\end{proof}
\begin{rem}
	This lemma is typically used to provide  desired upper bounds on  the operator norm of  Green's functions in the random operators case. We also want to remark  that  the estimate \eqref{x8} does not depend on   $\varepsilon \Gamma_\phi$. 
\end{rem}

\section{Iteration Theorem}\label{ITthm}

In this section, we introduce  an iteration theorem, which mainly deals  with Green's function estimates in the induction steps.  We   first define the induction parameters  ($s\geq 0$): 
\begin{align}
\nonumber L_0&\in\N,\ L_{s+1}=[L_s^{\alpha}],\ \kappa_0\in\left(0,\frac{\g}{5}\right],\ \kappa_{\infty}\in(0,\kappa_0),\\
	\label{kappa}	\kappa_{s+1}&=\kappa_{s}-\left(\frac{10\g}{L_s^{\frac{4}{5}\alpha-1}}+\frac{10\g}{\log^{\rho-1}L_s}+\frac{20+\alpha^{\rho'}}{\log^{\rho-\rho'}L_s}\right) (1<\rho'<\rho<\rho'+1).
\end{align}
 
%\begin{stm}[$\mathcal{P}_s$ $(s\ge1)$]
\begin{defn}\label{stm}
	Fix $E_0\in\R$.   Let $I=[E_0-\et,E_0+\et]$ with some $\et>0$. For each $s\geq 0,$  we define %the following two properties: 
	%Fix $E_0\in\R$ and $\et>0$. Let $I=[E_0-\et,E_0+\et]$.  Assume that  the following estimates hold true:
	\begin{itemize}
	\item [${\bf(P1)}_s$:] for all $\|\bm x-\bm y\|>2L_{s}$, 
	\begin{align}\label{P1e}
	\Po\{\exists E\in I\text{ s.t.,  both $B_{L_{s}}(\bm x)$ and $B_{L_{s}}(\bm y)$ are $(\kappa_{s},E)$-bad}\}\le L_{s}^{-2p};
\end{align}
	\item [${\bf(P2)}_s$:]  for all $\|\bm x-\bm y\|> 2L_{s}$,
	\begin{align}\label{P2e}
	\Po\left\{\bigcup_{E\in I}(\textbf{D}_{s,\bm x}(E)\cap \textbf{D}_{s,\bm y}(E))\right\}\le \frac{1}{4}L_{s}^{-2p},
\end{align}
\end{itemize} 
 where for $\bm w=\bm x, \bm y,$
\begin{align}\label{Dsx}
\textbf{D}_{s,\bm w}(E):=\{\text{$\exists U\in \mathscr{T}_{s,\bm w}$ s.t.,  $U$ is $E$-R}\}
\end{align}
with 
{\begin{align}\label{Tsx}
		\mathscr{T}_{s,\bm w}:=\{B_{L}(\bm z):\ B_{L}(\bm z)\subset B_{L_s}(\bm w)\text{ with $L=10L_{s-1},40L_{s-1},130L_{s-1},L_{s}$}\}.
\end{align}}
%Then we have for all $\|\bm x-\bm y\|> 2L_s,$
%$$\Po\{\exists E\in I\text{ s.t.,  both $B_{L_s}(\bm x)$ and $B_{L_s}(\bm y)$ are $(\kappa_s,E)$-bad}\}\le L_s^{-2p}.$$
%\end{stm}
\end{defn}

\begin{rem}
	%This statement    represents   certain  phenomenon of  “propagation of smallness”   on the probability  involving  Green's function estimates. 
	The initial estimate ${\bf(P1)}_0$ holds true as shown in Theorem \ref{P1}, which requires the smallness of $\varepsilon$ (i.e., the large disorder condition) in the current setting. This base case serves as the starting point for the MSA. Importantly, as will be demonstrated later, the verification of both ${\bf(P1)}_s$ and  ${\bf(P2)}_s$ for all scales $s\geq 1$ does not require the large disorder condition. 
		Indeed, ${\bf(P2)}_s$ can be  established  via the  { priori}  Wegner estimate (i.e., there is no need to perfrom the multi-scale induction), which is also  independent of the large disorder condition. 
		This key observation suggests that, once the initial scale is controlled under large disorder, the induction mechanism can propagate to all scales regardless of the disorder strength. Consequently, we may expect localization to hold at extreme energies for any non-zero disorder, mirroring the well-known behavior of  Schr\"odinger operators (cf. \cite{CKM87}).   In $d=1$, it is known that the localization holds true on the whole energy interval for all non-zero disorder. This was proved using transfer matrix methods, which are not available for long-range hopping operators. The presence of a long-range hopping makes this  all coupling localization result difficult  to prove   (cf. \cite{JM99} for the   proof of spectral localization  for   one dimensional  random operators with some  power-law long-range  hopping). 
	
	%, where \textbf{(P1)} provides off-diagonal decay of Green' s functions}, {\color{blue}while \textbf{(P2)} is obtained from Lemma \ref{x7-0} which can also be used to obtain  upper bounds on  the operator  norm.}
\end{rem}

The  iteration theorem  reads as 
\begin{thm}\label{ite}
	%Let $\mu\in\mathscr{H}(\lambda)$ and fix $0<\beta<\mathcal{D}_{\lambda}(\mu)$. Then there exists
	%\begin{align*}
	%\underline{L}_0=\underline{L}_0(\mu,\beta, \lambda,p,d, \g,\rho,\rho',\alpha,\kappa_0,\kappa_{\infty})>0
%\end{align*}
%such that the following holds.  
Let  $L_0\geq \underline{L}_0\gg1$ and $s\geq1$.  % then there are $\ep_0=\ep_0(\mu, \beta,L_0,d,\lambda,\phi,p)>0$ and $\et=\et(\mu, \beta,L_0,d,\lambda,p)>0$ so that  for $0<\ep<\ep_0$, 
 Assume that both  ${\bf(P1)}_{s-1}$ and ${\bf(P2)}_s$  hold true. Then  ${\bf(P1)}_s$ holds true as well. 
%the statement $\mathcal{P}_s$ holds for all $s\ge1$.
\end{thm}
The proof of this  iteration theorem  will be finished in \S \ref{pot1}. 

%\begin{rem}
	%If $L_0\ge \underline{L}_0\gg1$, we can get
	%\begin{align}\label{kappa1} \sum_{s=0}^{\infty}\left(\frac{10\g}{\log^{\rho-1}L_s}+\frac{10\g}{L_s^{\frac{4}{5}\alpha-1}}+\frac{30+2^{\rho}}{\log^{\rho-\rho'}L_s}\right)< \kappa_0-\kappa_{\infty},
	%\end{align}
	%which implies for any $s\ge0$, $\kappa_{\infty}<\kappa_s\le \kappa_0$. Therefore  the proof of Theorem \ref{thm1} can be derived from combining  Lemma \ref{x7-0}, Theorem \ref{P1} and Theorem \ref{ite}.
%\end{rem}

\section{The validity of $ ({\bf P2})_s$}\label{sVP2}

In this section,  we  aim to prove the validity of $({\bf P2})_s$ for $s\geq 1$. The proof  will follow directly from the Wegner estimate. 
% in Statement \ref{stm}. The following theorem is a direct consequence of Theorem \ref{x7-0}.
\begin{thm}[Verification of $({\bf P2})_s$]\label{VP2}
	Let $\mu\in\mathscr{H}(\lambda)$ (i.e.,  $\mathcal{D}_{\lambda}(\mu)>0$). Fix $0<\beta<\mathcal{D}_{\lambda}(\mu)$. Then for $L_s\ge L_{s-1}\ge L_0\ge \underline{L}_0(\lambda,\mu,\beta,d,p,\rho',\alpha)>0$, we have
	$$\Po\left\{\bigcup_{E\in I}(\textbf{D}_{s,\bm x}(E)\cap \textbf{D}_{s,\bm y}(E))\right\}\le \frac{1}{4}L_{s}^{-2p}$$
	for all $\|\bm x-\bm y\|> 2L_{s}$, where $\textbf{D}_{s,\bm x}(E)$ and $\textbf{D}_{s,\bm y}(E)$ are defined in \eqref{Dsx}.
\end{thm}
\begin{proof}
	Suppose that $\omega\in\bigcup_{E\in I}(\textbf{D}_{s,\bm x}(E)\cap \textbf{D}_{s,\bm y}(E))$.  Then there are some $E
	\in I$, $U_{\bm x}\in \mathscr{T}_{s,\bm x}$ and $U_{\bm y}\in \mathscr{T}_{s,\bm y}$  (cf. \eqref{Tsx}) such that
	\begin{align*}
		\dist(E,\sigma(H_{U_{\bm x}}))<e^{-\log^{\rho'}L_{s-1}},\ \dist(E,\sigma(H_{U_{\bm y}}))<e^{-\log^{\rho'}L_{s-1}},
	\end{align*}
	and then 
	\begin{align*}
		\dist(\sigma(H_{U_{\bm x}}),\sigma(H_{U_{\bm y}}))<2e^{-\log^{\rho'}L_{s-1}}.
	\end{align*}
	This implies 
	$$\bigcup_{E\in I}(\textbf{D}_{s,\bm x}(E)\cap \textbf{D}_{s,\bm y}(E))\subset \bigcup_{U_{\bm x}\in \mathscr{T}_{s,\bm x} \atop U_{\bm y}\in \mathscr{T}_{s,\bm y} }\left\{\dist(\sigma(H_{U_{\bm x}}),\sigma(H_{U_{\bm y}}))<2e^{-\log^{\rho'}L_{s-1}}\right\}.$$

	Next, it  needs  to control $\Po\{\dist(\sigma(H_{U_{\bm x}}),\sigma(H_{U_{\bm y}}))\le 2e^{-\log^{\rho'}L_{s-1}}\}$. 
	For this purpose, we will use Lemma \ref{x7-0}.  Then applying Lemma \ref{x7-0} with $\epsilon=2e^{-\log^{\rho'}L_{s-1}}$  and  $L\in\{L_{s},10L_{s-1},40L_{s-1},130L_{s-1}\}$  yields 
	\begin{align*}
		\Po\{\dist(E,\sigma(H_{U_{\bm y}}))\le 2e^{-\log^{\rho'}L_{s-1}}\}\le \beta^{-1}2^{\lambda}(2L_{s}+1)^{d(1+\lambda)}e^{-\lambda\log^{\rho'}L_{s-1}}.
	\end{align*}
	 {From  the $i.i.d$ assumption of  the potentials, $L_s\ge L_{s-1}\ge L_0\ge\underline{L}_0>0$ and $\# {U_{\bm x}}\le (2L_s+1)^d$,  it follows that  (similar to the proof of Lemma 5.28 in \cite{Kir08})}
	\begin{align*}
		&\ \ \Po\{\dist(\sigma(H_{U_{\bm x}}),\sigma(H_{U_{\bm y}}))\le 2e^{-\log^{\rho'}L_{s-1}}\}\\
		\le&\ \  \sum_{i=1}^{\# U_{\bm x}}\Po\{\dist(E,\sigma(H_{U_{\bm y}}))\le 2e^{-\log^{\rho'}L_{s-1}}\}\\
		\le&\ \  \beta^{-1}2^{\lambda}(2L_s+1)^{d(2+\lambda)}e^{-\lambda\log^{\rho'}L_{s-1}}.
	\end{align*}

	Finally, since $\# \mathscr{T}_{s,\bm x},\ \# \mathscr{T}_{s,\bm y}\le 4(2L_s+1)^d$ and $L_s\ge L_{s-1}\ge L_0\ge \underline{L}_0>0$,  we obtain 
	\begin{align*}
		&\ \ \Po\left\{\bigcup_{E\in I}(\bm D_{\bm x}(E)\cap \bm D_{\bm y}(E))\right\}\\
		\le&\ \ \sum_{U_{\bm x}\in \mathscr{T}_{s,\bm x} \atop U_{\bm y}\in \mathscr{T}_{s,\bm y}}\Po\{\dist(\sigma(H_{U_{\bm x}}),\sigma(H_{U_{\bm y}}))\le 2e^{-\log^{\rho'}L_{s-1}}\}\\
		\le&\ \  16\beta^{-1}2^{\lambda}(2L_s+1)^{d(4+\lambda)}e^{-\lambda\log^{\rho'}L_{s-1}}\le \frac{1}{4}L_s^{-2p}.
	\end{align*}

This completes the proof. 

\end{proof}

\section{Coupling Lemma and the Proof of Theorems \ref{thm1}}\label{pot1}
In this section, we aim to  prove Theorem \ref{ite} and  Theorem \ref{thm1}. One of the key  ingredients  toward proving  Theorem \ref{ite} is  a   \textbf{Coupling Lemma}.   

 In the following three subsections, we will prove the \textbf{Coupling Lemma}, Theorem \ref{ite}, and Theorem \ref{thm1}, respectively.

%We first prove the following \textbf{Coupling Lemma}. 
\subsection{{\bf Coupling Lemma}}
%Let $I=[E_0-\et,E_0+\et]$ which is given in Theorem \ref{P1} and fix $E\in I$ throughout this subsection. We have
We have 
\begin{lem}[{\bf Coupling Lemma}]\label{cl}
	Let $E\in\R$. Assume that 
	\begin{enumerate}
		\item [(1)] $B_{L_{s}}(\bm x)$ is $E$-{\rm NR} {\rm (cf. \eqref{y5})};
		\item [(2)] Each $B_{jL_{s-1}}(\bm z)\subset B_{L_s}(\bm x)$ with  $ j=10,40,130$  is $E$-{\rm NR}; 
		\item[(3)]  
			There are at most three pairwise disjoint $(\kappa_{s-1},E)$-{\rm bad} $L_{s-1}$-cubes in $B_{L_s}(\bm x)$  {\rm(cf. \eqref{y6})}.
	\end{enumerate}  
	Then for 
	\begin{align*}
		L_s\ge L_{s-1}\ge L_0\ge \underline{L}_0(d,\rho,\rho',\alpha,\kappa_0,\kappa_{\infty})>0,
	\end{align*}
	$B_{L_s}(\bm x)$ is $(\kappa_s,E)$-{\rm good} {\rm (cf. \eqref{y6})}.
\end{lem}
\begin{proof}[Proof of Lemma \ref{cl}] 
In order to complete the proof, it  needs  to prove 
	\begin{align}\label{gs}
		|\mathcal{G}_{B_{L_s}(\bm x)}^E(\bm z,\bm y)|\le e^{-\kappa_{s}\log^{\rho}(\|\bm z-\bm y\|+1)}\ \text{for $\|\bm z-\bm y\|\ge L_s^{\frac{4}{5}}$}.
	\end{align}
	The proof is based on the iteration of resolvent identity, and can be divided into   three steps. 	\begin{itemize}
		\item[\textbf{Step 1}:] \textbf{Estimates on good sites}
	\end{itemize}
	We begin with a  geometric construction. 
	\begin{lem}\label{neighbor}
		For any $\bm z\in B_{L_s}(\bm x)$, there is a point $\hat{\bm z}$ such that
		\begin{align}\label{hx}
			\bm z\in B_{L_{s-1}}(\hat{\bm z})\subset B_{L_s}(\bm x)
		\end{align}
		and 
		\begin{align}
			\label{dhx1}\|\bm y-\bm z\|&\ge\|\bm y-\hat{\bm z}\|\ge L_{s-1}+1\ \text{\rm for}\ \bm y\in B_{L_s}(\bm x)\setminus B_{L_{s-1}}(\hat{\bm z}),\\
			\label{dhx2}\|\bm y-\hat{\bm z}\|&\le L_{s-1}\ \text{\rm for}\ \bm y\in B_{L_s}(\bm x)\cap B_{L_{s-1}}(\bm z).
		\end{align}
	\end{lem}
	\begin{proof}
		For a detailed proof, we refer to the Appendix \ref{APPdan}.
	\end{proof}
	 The main result  in  this step is 
\begin{lem}\label{gfeg}
Let $L_s\ge L_{s-1}\ge L_0\ge \underline{L}_0(d,\rho,\rho',\alpha)>0$ and $\bm z\in B_{L_s}(\bm x)$. Assume that $B_{L_{s-1}}(\hat{\bm z})\subset B_{L_s}(\bm x)$ is $(\kappa_{s-1},E)$-$\good$ and $\bm y\in B_{L_s}(\bm x)\setminus B_{L_{s-1}}(\hat{\bm z})$, where $\hat{\bm z}$ is defined in $\eqref{hx}$. Then there is some $\bm z'\in B_{L_s}(\bm x)\setminus B_{L_{s-1}}(\hat{\bm z})$,
	\begin{align}\label{y8}
		|\mathcal{G}_{B_{L_s}(\bm x)}^E(\bm z,\bm y)|\leq e^{-K_{s-1}\cdot\log^{\rho}(1+ \|\bm z-\bm z'\|)}|\mathcal{G}_{B_{L_s}(\bm x)}^E(\bm z',\bm y)|,
	\end{align}
	where $K_{s-1}=\kappa_{s-1}-\frac{2}{\log^{\rho-\rho '}L_{s-1}}$.
\end{lem}
\begin{proof}
	By using the resolvent identity (cf. \cite{Kir08}) and \eqref{hx}, we obtain
	\begin{align}\label{y9}
		\mathcal{G}_{B_{L_{s}}(\bm x)}^E(\bm z,\bm y)=-\ep\sum\limits_{\substack{\bm m\in B_{L_{s-1}}(\hat{\bm z}) \\
				\bm n\in B_{L_{s}}(\bm x)\setminus B_{L_{s-1}}(\hat{\bm z})}}
		\mathcal{G}_{B_{L_{s-1}}(\hat{\bm z})}^E(\bm z,\bm m)\cdot \Gamma_{\phi}(\bm m,\bm n)\cdot \mathcal{G}_{B_{L_{s}}(\bm x)}^E(\bm n,\bm y).
	\end{align}
	Since $B_{L_{s-1}}(\hat{\bm z})$ and $B_{L_s}(\bm x)$ are all  finite sets, there exist $ \bm m'\in B_{L_{s-1}}(\hat{\bm z})$ and $\bm z'\in B_{L_s}(\bm x)\setminus B_{L_{s-1}}(\hat{\bm z})$ satisfying
	\begin{align*}
		&\ \ |\mathcal{G}_{B_{L_{s-1}}(\hat{\bm z})}^E(\bm z,\bm m')\cdot \Gamma_{\phi}(\bm m',\bm z')\cdot \mathcal{G}_{B_{L_s}(\bm x)}^E(\bm z',\bm y)|\\
		=&\ \ \max\limits_{\substack{\bm m\in B_{L_{s-1}}(\hat{\bm z}) \\
				\bm n\in B_{L_s}(\bm x)\setminus B_{L_{s-1}}(\hat{\bm z})}}|\mathcal{G}_{B_{L_{s-1}}(\hat{\bm z})}^E(\bm z,\bm m)\cdot \Gamma_{\phi}(\bm m,\bm n)\cdot \mathcal{G}_{B_{L_s}(\bm x)}^E(\bm n,\bm y)|.
	\end{align*}
	Therefore, from \eqref{y2-d} and \eqref{y9}, we get
	\begin{align}
		\nonumber|\mathcal{G}_{B_{L_s}(\bm x)}^E(\bm z,\bm y)|\le&\  (2L_{s-1}+1)^{d}(2L_s+1)^{d}|\mathcal{G}_{B_{L_{s-1}}(\hat{\bm z})}^E(
			\bm z,\bm m')|e^{-\gamma\log^{\rho}(1+\|\bm m'-\bm z'\|)}\\
			\label{y10}&\ \ \cdot|\mathcal{G}_{B_{L_s}(\bm x)}^E(\bm z',\bm y)|.
	\end{align}
	Then  we   break the discussion  into two cases.  
	\begin{enumerate}
		\item[\textbf{Case 1}:] $\|\bm m'-\bm z\|>L_{s-1}^{\frac{4}{5}}$. From $B_{L_{s-1}}(\hat{\bm z})$ is $(\kappa_{s-1},E)$-good (cf.  \eqref{y6}) and \eqref{qua}, we have
	\end{enumerate}
	\begin{align}
			\nonumber&\ \ |\mathcal{G}_{B_{L_{s-1}}(\hat{\bm z})}^E(\bm z,\bm m')|\cdot e^{-\gamma\log^{\rho}(\|\bm m'-\bm z'\|+1)}\\
			\nonumber\le&\ \  e^{-\gamma\log^{\rho}(\|\bm z-\bm m'\|+1)}\cdot e^{-\gamma\log^{\rho}(\|\bm m'-\bm z'\|+1)}\\
			\label{y11}\le&\ \  e^{-\gamma\log^{\rho}(\|\bm z-\bm z'\|+1)}\cdot e^{\gamma C(\rho)\log^{\rho}2}.
	\end{align}
	Combining \eqref{y10}, \eqref{y11} and $L_s\ge L_{s-1}\ge L_0\ge\underline{L}_0$ implies (since $\|\bm z'-\bm z\|\geq L_{s-1}+1$ by \eqref{dhx1})
	\begin{align}\label{g1}
	|\mathcal{G}_{B_{L_s}(\bm x)}^E(\bm z,\bm y)|\le e^{-\left(\gamma-\frac{2}{\log^{\rho-\rho'}L_{s-1}}\right)\cdot\log^{\rho}(\|\bm z-\bm z'\|+1)}\cdot|\mathcal{G}_{B_{L_s}(\bm x)}^E(\bm z',\bm y)|.
	\end{align}
	\begin{enumerate}
		\item[\textbf{Case 2}:] $\|\bm m'-\bm z\|\le L_{s-1}^{\frac{4}{5}}$. From \eqref{qua}, we have
		\begin{align}\label{g21}
			e^{-\gamma\log^{\rho}(\|\bm m'-\bm z'\|+1)}\le e^{-\gamma\log^{\rho}(\|\bm z'-\bm z\|+1)}\cdot e^{\gamma\log^{\rho}(\|\bm z-\bm m'\|+1)}\cdot e^{\gamma C(\rho)\log^{\rho}2}.
		\end{align}
		%According to \eqref{dhx1} and $\bm z'\in B_{L_s}(\bm x)\setminus B_{L_{s-1}}(\hat{\bm z})$, we first obtain $\|\bm z'-\bm z\|\ge L_{s-1}+1$. 
		Next, from   $\|\bm m'-\bm z\|\le L_{s-1}^{\frac{4}{5}} \leq \|\bm z'-\bm z\|^{\frac 45}$, it  follows  that
		\begin{align}\label{g22}
			e^{\gamma\log^{\rho}(\|\bm z-\bm m'\|+1)}\le e^{\gamma (\frac{4}{5})^{\rho}\left(\log^{\rho}(\|\bm z'-\bm z\|+1)+2\rho\log 2\log^{\rho-1}(\|\bm z'-\bm z\|+1)\right)}.
		\end{align}
		Combining  \eqref{y5}, \eqref{g21}, \eqref{g22}, $1<\rho'<\rho<\rho'+1$ and $L_s\ge L_{s-1}\ge L_0\ge\underline{L}_0$ implies (since $\|\bm z'-\bm z\|\geq L_{s-1}+1$ by \eqref{dhx1})
			\begin{align}
			\nonumber&\ |\mathcal{G}_{B_{L_{s-1}}(\hat{\bm z})}^E(\bm z,\bm m')|\cdot e^{-\gamma\log^{\rho}(\|\bm m'-\bm z'\|+1)}\\
			\nonumber\le&\ \  e^{\log^{\rho '}L_{s-1}}\cdot e^{-\gamma\log^{\rho}(\|\bm z'-\bm z\|+1)}\cdot e^{\gamma C(\rho)\log^{\rho}2}\\
			\nonumber&\ \ \cdot e^{\gamma (\frac{4}{5})^{\rho}\left(\log^{\rho}(\|\bm z'-\bm z\|+1)+2\rho\log 2\log^{\rho-1}(\|\bm z'-\bm z\|+1)\right)}\\
			\label{806}\le&\ \  e^{-\left(\gamma-\gamma(\frac{4}{5})^{\rho}-\frac{1}{\log^{\rho-\rho'}L_{s-1}}\right)\log^{\rho}(\|\bm z'-\bm z\|+1)}.
		\end{align}
		From \eqref{y10}, \eqref{806} and and $L_s\ge L_{s-1}\ge L_0\ge\underline{L}_0$, we get
		\begin{align}
			\nonumber&\ |\mathcal{G}_{B_{L_s}(\bm x)}^E(\bm z,\bm y)|\\
			\nonumber\le&\  e^{-\left(\gamma-\gamma(\frac{4}{5})^{\rho}-\frac{2}{\log^{\rho-\rho'}L_{s-1}}\right)\log^{\rho}(\|\bm z'-\bm z\|+1)}|\mathcal{G}_{B_{L_s}(\bm x)}^E(\bm z',\bm y)|\\
			\label{g2}\le&\ e^{-\left(\frac{\gamma}{5}-\frac{2}{\log^{\rho-\rho'}L_{s-1}}\right)\log^{\rho}(\|\bm z'-\bm z\|+1)}|\mathcal{G}_{B_{L_s}(\bm x)}^E(\bm z',\bm y)|.
		\end{align}
	\end{enumerate}
	From  $0<\kappa_{s-1}\le \kappa_0\le\frac{\gamma}{5}$ and $L_s\ge L_{s-1}\ge L_0\ge \underline{L}_0>0$,   \eqref{g1} and \eqref{g2}, it follows  that
	\begin{align}\label{y12-1}
		|\mathcal{G}_{B_{L_s}(\bm x)}^E(\bm z,\bm y)|\leq e^{-(\kappa_{s-1}-\frac{2}{\log^{\rho-\rho '}L_{s-1}})\cdot \log^{\rho}(\|\bm z-\bm z'\|+1)}|\mathcal{G}_{B_{L_s}(\bm x)}^E(\bm z',\bm y)|.
	\end{align}
	\end{proof}
		\begin{itemize}
		\item[\textbf{Step 2}]: \textbf{Estimates on bad sites}
	\end{itemize}
 In the following,  we only  deal with  the case that $B_{L_s}(\bm x)$ contains  exactly three pairwise disjoint $(\kappa_{s-1},E)$-bad $L_{s-1}$-cubes $B_{L_{s-1}}(\bm w_i)$, $1\le i\le 3$, since the other cases are easier to handle. %Let us first  define dangerous cubes, of  which  the construction  is based on  several  lemmas. 
\begin{lem}\label{3kuai}
	For $1\le i\le 3$, there is some $\bm w_i^*\in\Z^d$ such that
	\begin{align*}
		B_{L_{s-1}}(\bm w_i)\subset B_{10L_{s-1}}(\bm w_i^{*})\subset B_{L_s}(\bm x)
	\end{align*}
	and
	\begin{align}
			\label{dhx3}\|\bm y-\bm w_i\|&\ge\|\bm y-\bm w_{i}^{*}\|\ge 10L_{s-1}+1\ \text{\rm for}\ \bm y\in B_{L_s}(\bm x)\setminus B_{10L_{s-1}}(\bm w_i^{*}).
\end{align}
	Moreover, if $\bm z\in B_{L_s}(\bm x)\setminus\cup_{i=1}^3 B_{10L_{s-1}}(\bm w_i^{*})$, then $B_{L_{s-1}}(\hat{\bm z})$ is $(\kappa_{s-1},E)$-{\rm good},  where $\hat{\bm z}$ is defined in \eqref{hx}.
\end{lem}
\begin{proof}
	For a detailed proof, we refer to the Appendix \ref{APPdan}.
\end{proof}

\begin{lem}\label{dancub}
There are {cubes} $B_{1},B_2,B_3\subset B_{L_{s}}(\bm x)$  satisfying
	\begin{enumerate}
		\item [(1).] ${\rm dist}(B_{i},B_{j})\ge 10 L_{s-1},\ 1\le i\neq j\le 3$; 
		\item [(2).] $\Pi:= \bigcup\limits_{i=1}^{3}B_{i}\supset\bigcup\limits_{i=1}^{3}B_{10L_{s-1}}(\bm w_{i}^{*})$; 
		\item [(3).] $\sum\limits_{i=1}^{3}l_{i}\leq 260L_{s-1}$, where $l_i=\diam(B_i)$; 
		\item [(4).] $B_i$ is $E$-{\rm NR}  for $1\le i\le 3$, {\rm (cf. \eqref{y5})};
		\item [(5).] $B_{L_{s-1}}(\hat{\bm z})$ is $(\kappa_{s-1},E)$-{\rm good} for $\bm z\in B_{L_s}(\bm x)\setminus\Pi$,  where $\hat{\bm z}$ is defined in \eqref{hx}.
	\end{enumerate}
\end{lem}

\begin{proof}
	For a detailed proof, we refer to the Appendix \ref{APPdan}.
\end{proof}
%At this stage, we can get %the Green's functions estimates on dangerous cubes.
The main result  in this step is

\begin{lem}\label{z101}
	Let $L_s\ge L_{s-1}\ge L_0\ge \underline{L}_0(d,\g,\rho,\rho',\alpha,\kappa_{0},\kappa_{\infty})>0$ and fix $1\le i\le 3$. Assuming  $\bm z\in B_{i}$ and $\bm y\in B_{L_s}(\bm x)\setminus B_{i}$,  then  there exist some $\tilde{\bm z}\in B_{i}$ and $ {\bm z'}\in B_{L_s}(\bm x)\setminus B_{i}$ such that
	\begin{align}\label{z102}
		|\mathcal{G}_{B_{L_s}(\bm x)}^E(\bm z,\bm y)|\leq e^{-(\kappa_{s-1}-\frac{20}{\log^{\rho-\rho '}L_{s-1}})\log^{\rho}(\|\tilde{\bm z}-\bm z'\|+1)}|\mathcal{G}_{B_{L_s}(\bm x)}^E(\bm z',\bm y)|.
	\end{align}
\end{lem}
\begin{proof}
	Using again the resolvent identity yields 
	\begin{align}
		\mathcal{G}_{B_{L_s}(\bm x)}^E(\bm z,\bm y)=-\ep \sum\limits_{\substack{
				\bm m\in B_{i}\\
				\bm n\in B_{L_s}(\bm x)\setminus B_{i}}}    \mathcal{G}_{B_{i}}^E(\bm z,\bm m)\cdot \Gamma_{\phi}(\bm m,\bm n)\cdot \mathcal{G}_{B_{L_s}(\bm x)}^E(\bm n,\bm y).
	\end{align}
	Similar to the proof of  \eqref{y10}, we can find $\tilde{\bm z}\in B_{i}$ and $\bm z'\in B_{L_s}(\bm x)\setminus B_{i} $ such that
	\begin{align*}
			|\mathcal{G}_{B_{L_s}(\bm x)}^E(\bm z,\bm y)|
			\leq (l_{i}+1)^{d}(2L_s+1)^{d}\cdot|\mathcal{G}_{B_{i}}^E(\bm z,\bm \tilde{\bm z})|\cdot e^{-\gamma\log^{\rho}(\| \tilde{\bm z}-\bm z'\|+1)}|\mathcal{G}_{B_{L_s}(\bm x)}^E(\bm z',\bm y)|.
	\end{align*}
	Since $\sum_{i=1}^{3}l_i\le 260L_{s-1}$ and $B_{i}$ is $E$-NR (cf. \eqref{y5}), we have 
	\begin{align}
		\nonumber	|\mathcal{G}_{B_{L_s}(\bm x)}^E(\bm z,\bm y)|\le&\  (260L_{s-1}+1)^{d}(2L_s+1)^{d}\cdot e^{\log^{\rho'}(260L_{s-1})}\\
		\label{b11}	&\ \ \cdot e^{-\gamma\log^{\rho}(\|\tilde{\bm z}-\bm z'\|+1)}|\mathcal{G}_{B_{L_s}(\bm x)}^E(\bm z',\bm y)|.
	\end{align}
	
	We again  divide   the discussion   into two cases, $\|\tilde{\bm z}-\bm z'\|\ge L_{s-1}$ and $\|\tilde{\bm z}-\bm z'\|< L_{s-1}$.
	\begin{enumerate}
		\item [\textbf{Case 1}:] $\|\tilde{\bm z}-\bm z'\|\ge L_{s-1}$. From $L_s\ge L_{s-1}\ge L_0\ge \underline{L}_0>0$ and $\|\tilde{\bm z}-\bm z'\|\ge L_{s-1}$,  we get 
	\begin{align}
		\nonumber&\ \ (260L_{s-1}+1)^{d}(2L_s+1)^{d}\cdot e^{\log^{\rho'}(260L_{s-1})}\cdot e^{-\gamma\log^{\rho}(\|\tilde{\bm z}-\bm z'\|+1)}\\
	\label{b12}	\le&\ \  e^{-\left(\gamma-\frac{3}{\log^{\rho- \rho '}L_{s-1}}\right)\log^{\rho}(\|\tilde{\bm z}-\bm z'\|+1)}.
	\end{align}
	Thus,  combining \eqref{b11} and \eqref{b12} gives   
		\begin{align*}
				|\mathcal{G}_{B_{L_s}(\bm x)}^E(\bm z,\bm y)|\le e^{-\left(\gamma-\frac{3}{\log^{\rho- \rho '}L_{s-1}}\right)\log^{\rho}(\|\tilde{\bm z}-\bm z'\|+1)}|\mathcal{G}_{B_{L_s}(\bm x)}^E(\bm z',\bm y)|.
		\end{align*}
		\item [\textbf{Case 2}:] $\|\tilde{\bm z}-\bm z'\|< L_{s-1}$. In this case, from \eqref{dhx2} and Lemma \ref{dancub} (5), we have $\tilde{\bm z}\in B_{L_{s-1}}(\hat{\bm z}')$ and $B_{L_{s-1}}(\hat{\bm z}')$ is $(\kappa_{s-1},E)$-good. By Lemma \ref{gfeg}, there is some $\bm z''\in B_{L_s}(\bm x)\setminus B_{L_{s-1}}(\hat{\bm z}')$ such that
		\begin{align}\label{b21}
			|\mathcal{G}_{B_{L_s}(\bm x)}^E(\bm z',\bm y)|\le e^{-K_{s-1}\log^{\rho}(\|\bm z'-\bm z''\|+1)}\cdot |\mathcal{G}_{B_{L_s}(\bm x)}^E(\bm z'',\bm y)|.
		\end{align}
		Since $\bm z''\in B_{L_s}(\bm x)\setminus B_{L_{s-1}}(\hat{\bm z}')$, \eqref{dhx1} and $L_s\ge L_{s-1}\ge  L_0\ge \underline{L}_0>0$, we obtain 
		\begin{align}
			\nonumber&\ \ (260L_{s-1}+1)^{d}(2L_s+1)^{d}\cdot e^{\log^{\rho '}(260L_{s-1})}\cdot e^{-K_{s-1}\log^{\rho}(\|\bm z'-\bm z''\|+1)}\\
			\nonumber\leq&\  \ e^{-2\log^{\rho'}L_{s-1}}\cdot e^{-\left(K_{s-1}-\frac{10}{\log^{\rho-\rho'}L_{s-1}}\right)\log^{\rho}(\|\bm z'-\bm z''\|+1)}\\
			\label{b22}\leq&\ \  e^{-2\log^{\rho'}L_{s-1}}\cdot e^{-\left(\kappa_{s-1}-\frac{20}{\log^{\rho-\rho'}L_{s-1}}\right)\log^{\rho}(\|\bm z'-\bm z''\|+1)}.
		\end{align}
		By \eqref{qua} and $0<\kappa_{s-1}\le \frac{\g}{5}$, we have
		\begin{align}
			\nonumber&\ \ e^{-\gamma\log^{\rho}(\|\tilde{\bm z}-\bm z'\|+1)}\cdot e^{-\left(\kappa_{s-1}-\frac{20}{\log^{\rho-\rho'}L_{s-1}}\right)\log^{\rho}(\|\bm z'-\bm z''\|+1)}\\
		\nonumber	\le&\ \ e^{-\left(\kappa_{s-1}-\frac{20}{\log^{\rho-\rho'}L_{s-1}}\right)\left(\log^{\rho}(\|\tilde{\bm z}-\bm z'\|+1)+\log^{\rho}(\|\bm z'-\bm z''\|+1)\right)}\\
			\label{b23}\le &\ \ e^{K'_{s-1}C(\rho)\log^{\rho}2} e^{-K'_{s-1}\log^{\rho}(\|\tilde{\bm z}-\bm z''\|+1)},
		\end{align}
			where
		\begin{align}\label{K'}
			K'_{s-1}=\kappa_{s-1}-\frac{20}{\log^{\rho-\rho'}L_{s-1}}.
		\end{align}
		According to \eqref{b11}, \eqref{b21}, \eqref{b22} and \eqref{b23}, we obtain 
		\begin{align}
			\nonumber&\ \ |\mathcal{G}_{B_{L_s}(\bm x)}^E(\bm z,\bm y)|\\
			\nonumber\le&\ \  e^{-2\log^{\rho'}L_{s-1}}\cdot e^{K'_{s-1}C(\rho)\log^{\rho}2} e^{-K'_{s-1}\log^{\rho}(\|\tilde{\bm z}-\bm z''\|+1)}\cdot |\mathcal{G}_{B_{L_s}(\bm x)}^E(\bm z'',\bm y)|\\
			\label{z103}\le&\ \  e^{-\log^{\rho'}L_{s-1}} e^{-K'_{s-1}\log^{\rho}(\|\tilde{\bm z}-\bm z''\|+1)}\cdot|\mathcal{G}_{B_{L_s}(\bm x)}^E(\bm z'',\bm y)|.
		\end{align}
		If $\bm z''\notin B_{i}$, Lemma \ref{z101} has been proven. Otherwise,  $\bm z''\in B_{i}$,  and  by a similar argument, there exist some $\tilde{\bm z}''\in B_{i}$ and $ \bm z'''\in B_{L_s}(\bm x)$ such that
		\begin{align*}
			|\mathcal{G}_{B_{L_s}(\bm x)}^E(\bm z,\bm y)|&\leq e^{-2\log^{\rho '}L_{s-1}}\cdot|\mathcal{G}_{B_{L_s}(\bm x)}^E(\bm z''',\bm y)|\\
			&\ \ \cdot e^{-K'_{s-1}\left(\log^{\rho}(\|\tilde{\bm z}-\bm z''\|+1)+\log^{\rho}(\|\tilde{\bm z}''-\bm z'''\|+1)\right)}.
		\end{align*}
		This above iteration procedure   will stop  after  finite many steps until we get (\ref{z102}), and otherwise,
		$|\mathcal{G}_{B_{L_s}(\bm x)}^E(\bm z,\bm y)|$ must vanish. This finishes the proof. 
			\end{enumerate}
\end{proof}

\begin{itemize}
	\item[\textbf{Step 3}]: \textbf{Completion of the proof of Coupling Lemma}
\end{itemize}
	
	In the following, we always assume that $\|\bm z-\bm y\|\ge L_s^{\frac{4}{5}}$. We will prove \eqref{gs} via the iteration using estimates obtained in the above two steps. For $\bm z\in B_{L_s}(\bm x)$, we define %$O(\bm z)\subset B_{L_s}(\bm x)$ by
	\begin{align*}
	O(\bm z)=\left\{\begin{array}{ll}
			B_{i}, & \bm z\in B_i,\\
			B_{L_{s-1}}(\hat{\bm z}),&\bm z\notin\Pi,
		\end{array}\right.
	\end{align*}
where $\hat{\bm z}$ is defined in \eqref{hx}. From Lemma \ref{gfeg} and Lemma \ref{z101}, there are $\tilde{\bm z}\in O(\bm z)$ and $\bm z'\in B_{L_s}(\bm x)\setminus O(\bm z)$ such that
\begin{align}\label{itera}
	|\mathcal{G}_{B_{L_s}(\bm x)}^E(\bm z,\bm y)|\leq e^{-K'_{s-1}\log^{\rho}(\|{\bm z'}- \tilde{\bm z}\|+1)}|\mathcal{G}_{B_{L_s}(\bm x)}^E(\bm z',\bm y)|,
\end{align}
where  $K'_{s-1}=\kappa_{s-1}-\frac{20}{\log^{\rho-\rho'}L_{s-1}}$. 

Next,  iterating  \eqref{itera} for $m\geq 2$  steps  leads to the  following:  
there exist $\bm z_{1}, \bm z_{2}, \cdots, \bm z_{m}\in B_{L_s}(\bm x)$ and $\tilde{\bm z}_0, \tilde{\bm z}_1,\cdots, \tilde{\bm z}_{m-1}\in B_{L_s}(\bm x)$ such that, 
\begin{align}
	\nonumber	&\ \ |\mathcal{G}_{B_{L_s}(\bm x)}^E(\bm z,\bm y)|\\
	\label{vv5}	\le&\ \  e^{-K'_{s-1}\left(\sum_{k=0}^{m-1}\log^{\rho}(\|\bm z_{k+1}-\tilde{\bm z}_{k}\|)+1\right)}|\mathcal{G}_{B_{L_s}(\bm x)}^E(\bm z_{m},\bm y)|,
\end{align}
where $\bm z_0=\bm z$ and $\bm z_{k+1}=\bm z'_k$, $k=0,1,\cdots,m-1$. We define $n\geq 1$ to be the smallest integer so that $\bm z_n\in O(\bm y).$ We then have $\bm z_{i}\notin O(\bm y)$ for $ i=0,1,\cdots, n-1$.  %From \eqref{vv5}, to prove \eqref{gs}, it needs to  control  
%\begin{align}\label{decayr}
%\sum_{k=0}^{n-1}\log^{\rho}(\|\bm z_{k+1}-\tilde{\bm z}_{k}\|+1).
%\end{align}
%Without loss of generality, we assume that after $n$ steps the iteration procedure stops: 
% We can iterate the 
 %We can iterate 
   %On the other hand, the iteration procedure which stops at $n$-th step yields
%\begin{itemize}
	%\item[(1)] 
	%\item[(2)] $\bm z_{n}\in O(\bm y)$ or $n> 2\frac{\kappa_0\log^{\rho}(1+\|\bm z-\bm y\|)+\log^{\rho'}L_s}{K'_{s-1}\log^{\rho}(L_{s-1}+1)}+2$.
%\end{itemize}
%Next, we will prove \eqref{gs} by dividing it into two cases. 
We divide the discussion into two cases: 
\begin{itemize}
	\item[\textbf{Case 1}:]  $n\le 2\frac{\kappa_0\log^{\rho}(1+\|\bm z-\bm y\|)+\log^{\rho'}L_s}{K'_{s-1}\log^{\rho}(L_{s-1}+1)}+2$. %Note first that
Using  Lemma \ref{qua} and  the triangle inequality implies 
 \begin{align}
 		\nonumber&\ \ \sum_{k=0}^{n-1}\log^{\rho}(\|\bm z_{k+1}-\tilde{\bm z}_{k}\|+1)\\
 		\ge&\ \ 
 		\nonumber\log^{\rho}\left(\sum_{k=0}^{n-1}\|\bm z_{k+1}-\tilde{z}_{k}\|+1\right)-C(\rho)\log^{\rho}n\\
 	\nonumber\ge&\ \  \log^{\rho}\left(\left(\sum_{k=0}^{n-1}\|\bm z_{k+1}-\bm z_{k}\|\right)+1-\left(\sum_{k=0}^{n-1}\|\bm z_k-\tilde{\bm z}_k\|\right)\right)-C(\rho)\log^{\rho} n\\
 	\label{801}\ge&\ \ \log^{\rho}\left(\|\bm z_n-\bm z\|+1-\left(\sum_{k=0}^{n-1}\|\bm z_k-\tilde{\bm z}_k\|\right)\right)-C(\rho)\log^{\rho} n.
 \end{align}
	In this  case, since $\bm z_k,\tilde{\bm z}_k\in O(\bm z_k)$ and $\diam (O(\bm z_k))\le 260L_{s-1}$ ($0\le k\le n-1$), we have
\begin{align}\label{z501}
\sum_{k=0}^{n-1}\|\bm z_{k}-\tilde{\bm z}_{k}\|\le 260nL_{s-1}
\end{align}
and
\begin{align}\label{z502}
	\|\bm z_n-\bm z\|\ge \|\bm z-\bm y\|-\|\bm y-\bm z_n\|\ge \|\bm z-\bm y\|-130L_{s-1}. 
\end{align}
 It  follows from \eqref{z501} and \eqref{z502} that 
\begin{align}\label{802}
	\|\bm z_n-\bm z\|-\left(\sum_{k=0}^{n-1}\|\bm z_k-\tilde{\bm z}_k\|\right)\ge\|\bm z-\bm y\|-130L_{s-1}-260nL_{s-1}.
\end{align}
 We then need an elementary  inequality to extract the factor $\log^{\rho}(\|\bm z-\bm y\|+1)$ from \eqref{801}. Indeed,   if $\theta,Q>0$ and $\theta\ll1\ll Q$, then 
\begin{align}
	\nonumber\ \ \left(\log(1-\theta)+Q\right)^{\rho}&=\ \ \left(\frac{\log(1-\theta)}{Q}+1\right)^{\rho}\cdot Q^{\rho}\\
	\nonumber&\ge(1+\rho\frac{\log(1-\theta)}{Q})\cdot Q^{\rho}\\
	\label{ele}&\ge(1-2\rho\frac{\theta}{Q})\cdot Q^{\rho}.
\end{align}
Since  $\|\bm z-\bm y\|\ge L_s^{\frac{4}{5}}\gg1$, applying  \eqref{ele} with $\theta=\frac{130L_{s-1}+260nL_{s-1}}{\|\bm z-\bm y\|+1}<\frac{130L_{s-1}+260nL_{s-1}}{L_s^{\frac{4}{5}}}\ll 1$ and $Q=\log(\|\bm z-\bm y\|+1)$ gives 
\begin{align}
		\nonumber&\ \ \log^{\rho}(\|\bm z-\bm y\|+1-130L_{s-1}-260nL_{s-1})\\
	\nonumber	=&\ \ \left(\log\left(1-\frac{130L_{s-1}+260nL_{s-1}}{\|\bm z-\bm y\|+1}\right)+\log(\|\bm z-\bm y\|+1)\right)^{\rho}\\
	\label{803}	\ge&\ \ \left(1-2\rho\frac{130L_{s-1}+260nL_{s-1}}{\log\left(1+L_s^{\frac{4}{5}}\right)L_s^{\frac{4}{5}}}\right)\log^{\rho}(\|\bm z-\bm y\|+1).
\end{align}
Then combining  \eqref{801}, \eqref{802} and \eqref{803}  shows 
\begin{align}
\nonumber	&\ \ \sum_{k=0}^{n-1}\log^{\rho}(\|\bm z_{k+1}-\tilde{\bm z}_{k}\|+1)\\
	\nonumber\ge&\ \  \log^{\rho}(\|\bm z-\bm y\|+1-130L_{s-1}-260nL_{s-1})-C(\rho)\log^{\rho}n\\
\label{804}	\ge&\ \  \left(1-2\rho\frac{130L_{s-1}+260nL_{s-1}}{\log\left(1+L_s^{\frac{4}{5}}\right)L_s^{\frac{4}{5}}}-\frac{C(\rho)\log^{\rho}n}{\left(\frac{4}{5}\right)^{\rho}\log^{\rho} L_s}\right)\log^{\rho}(\|\bm z-\bm y\|+1). 
\end{align}
From \eqref{vv5}, \eqref{804} and  since $B_{L_s}(\bm x)$ is $E$-NR (cf. \eqref{y5}), we have
\begin{align}
		\nonumber&\ \ |\mathcal{G}_{B_{L_s}(\bm x)}^E(\bm z,\bm y)|\\
		\nonumber\le&\ \  e^{-\left(K'_{s-1}\left(1-2\rho\frac{130L_{s-1}+260nL_{s-1}}{\log(1+L_s^{\frac{4}{5}})L_s^{\frac{4}{5}}}-\frac{C(\rho)\log^{\rho}n}{(\frac{4}{5})^{\rho}\log^{\rho}L_s}\right)\right)\log^{\rho}(\|\bm z-\bm y\|+1)}\cdot e^{\log^{\rho'}L_s}\\
		\le&\ \  \label{805}e^{-\left(K'_{s-1}\left(1-2\rho\frac{130L_{s-1}+260nL_{s-1}}{\log(1+L_s^{\frac{4}{5}})L_s^{\frac{4}{5}}}-\frac{C(\rho)\log^{\rho}n}{(\frac{4}{5})^{\rho}\log^{\rho}L_s}\right)-\frac{1}{(\frac{4}{5})^{\rho}\log^{\rho-\rho'}L_s}\right)\log^{\rho}(\|\bm z-\bm y\|+1)}.
\end{align}
Denote 
\begin{align}\label{K''}
	K''_{s-1}=K'_{s-1}\left(1-2\rho\frac{130L_{s-1}+260nL_{s-1}}{\log(1+L_s^{\frac{4}{5}})L_s^{\frac{4}{5}}}-\frac{C(\rho)\log^{\rho}n}{(\frac{4}{5})^{\rho}\log^{\rho}L_s}\right)-\frac{1}{(\frac{4}{5})^{\rho}\log^{\rho-\rho'}L_s}.
\end{align}
To get \eqref{gs}, it suffices to   prove $K''_{s-1}>\kappa_s.$
Since $\kappa_{s-1}>\kappa_{\infty}$ and $L_{s-1}\ge L_0\ge\underline{L}_0\gg1$, we obtain
\begin{align*}
	K'_{s-1}=\kappa_{s-1}-\frac{20}{\log^{\rho-\rho'}L_{s-1}}>\frac{\kappa_{\infty}}{2}.
\end{align*}
 From $K'_{s-1}>\frac{\kappa_{\infty}}{2}$, $\|\bm z-\bm y\|\le 2L_s$ and $L_s=[L_{s-1}^{\alpha}]$,  it follows that
\begin{align}\label{n}
	\nonumber n &\le 2\frac{\kappa_0\log^{\rho}(1+\|\bm z-\bm y\|)+\log^{\rho'}L_s}{K'_{s-1}\log^{\rho}(L_{s-1}+1)}+2\\
	&\le \frac{6\kappa_0\alpha^{\rho}}{K'_{s-1}}+2\le\frac{12\kappa_0\alpha^{\rho}}{\kappa_{\infty}}+2.
\end{align}
Finally, from $\kappa_{s-1}\le \kappa_0\le \frac{\g}{5}$, $\alpha\in\left(\frac{5}{4},\frac{2p}{p+2d}\right)$, \eqref{kappa}, \eqref{K'} and \eqref{n},  we have for $L_s\ge L_{s-1}\ge L_0\ge \underline{L}_0>0$,
\begin{align}
		\nonumber K''_{s-1}&= \left(\kappa_{s-1}-\frac{20}{\log^{\rho-\rho'}L_{s-1}}\right)\left(1-2\rho\frac{130L_{s-1}+260nL_{s-1}}{\log(1+L_s^{\frac{4}{5}})L_s^{\frac{4}{5}}}-\frac{C(\rho)\log^{\rho}n}{(\frac{4}{5})^{\rho}\log^{\rho}L_s}\right)-\frac{1}{(\frac{4}{5})^{\rho}\log^{\rho-\rho'}L_s}\\
	\nonumber&\ge \kappa_{s-1}-\left(\kappa_{s-1}\left(\frac{260\rho+520n\rho}{\log L_{s}\cdot L_{s-1}^{\frac{4}{5}\alpha-1}}+\frac{C(\rho)\log^{\rho}n}{\log^{\rho}L_{s-1}}\right)+\frac{20+\alpha^{\rho'}}{\log^{\rho-\rho'}L_{s-1}}\right)\\
		\nonumber&\ge \kappa_{s-1}-\left(\frac{50\kappa_{s-1}}{L_{s-1}^{\frac{4}{5}\alpha-1}}+\frac{50\kappa_{s-1}}{\log^{\rho-1}L_{s-1}}+\frac{20+\alpha^{\rho'}}{\log^{\rho-\rho'}L_{s-1}}\right)\\
		\label{vv10}&\ge \kappa_{s-1}-\left(\frac{10\g}{L_{s-1}^{\frac{4}{5}\alpha-1}}+\frac{10\g}{\log^{\rho-1}L_{s-1}}+\frac{20+\alpha^{\rho'}}{\log^{\rho-\rho'}L_{s-1}}\right)=\kappa_s.
\end{align}
We finish the  proof   of \eqref{gs} in this case. 
%According to \eqref{805}, \eqref{K''} and \eqref{vv10}, we obtain
%\begin{align*}
	%|\mathcal{G}_{B_{L_s}(\bm x)}^E(\bm z,\bm y)|\le e^{-\kappa_s\log^{\rho}(\|\bm z-\bm y\|+1)}.
%\end{align*}

\item[\textbf{Case 2}:] $n> 2\frac{\kappa_0\log^{\rho}(1+\|\bm z-\bm y\|)+\log^{\rho'}L_s}{K'\log^{\rho}(L_{s-1}+1)}+2$. In order to prove  \eqref{gs} in this case, it suffices to show  
\begin{align}\label{countn}
\# \mathscr{N}\ge\left[\frac{n}{2}\right],
\end{align}
 where
 \begin{align*}
 	\mathscr{N}=\{1\le k\le n-1:\ \|\bm z_{k+1}-\tilde{\bm z}_{k}\|>L_{s-1}\}.
 \end{align*} 
Indeed, from  \eqref{vv5}, \eqref{countn} and $\kappa_0\le \kappa_s$, it follows  that
\begin{align*}
	&\ \ |\mathcal{G}_{B_{L_s}(\bm x)}^E(\bm z,\bm y)|\\
	\le&\ \  e^{-K'_{s-1}\left(\sum_{k=0}^{n-1}\log^{\rho}(\|\bm z_{k+1}-\tilde{\bm z}_{k}\|+1)\right)}|\mathcal{G}_{B_{L_s}(\bm x)}^E(\bm z_{n},\bm y)|\\
		\le&\ \  e^{-\left[\frac{n}{2}\right]K'_{s-1}\log^{\rho}(L_{s-1}+1)}e^{\log^{\rho'}L_s}\\
		\le&\ \  e^{-\kappa_0\log^{\rho}(\|\bm z-\bm y\|+1)}\le e^{-\kappa_s\log^{\rho}(\|\bm z-\bm y\|+1)}.
\end{align*}
Finally, for the  proof of \eqref{countn}, we refer to the Appendix \ref{CN}.

\end{itemize}
This concludes the proof of the {\bf Coupling Lemma}.
\end{proof}

\subsection{Proof  of Theorem \ref{ite}}
In this subsection, we will  prove the iteration theorem (cf.  Theorem \ref{ite}), which follows from combining  the {\bf Coupling Lemma} and the probability estimates. 
\begin{proof}[Proof of Theorem \ref{ite}]
	%Let $I=[E_0-\et,E_0+\et]$ which is given in Theorem \ref{P1}. 
	We assume  both $({\bf P1})_{s-1}$  (cf. \eqref{P1e}, with $s$ replacing by $s-1$) and  $({\bf P2})_{s}$ (cf.   \eqref{P2e})  hold true. We aim to prove  $({\bf P1})_{s}$ holds true as well. 	
	For convenience, we rewrite $\bm x_1=\bm x, \bm x_2=\bm y$ and denote $O_i=B_{L_s}(\bm x_i)$ for $1\le i\le 2$. 
	Then  $\|\bm x_1-\bm x_2\|> 2L_s$. We define the following events for $1\le i\le 2$ and $E\in I$:
	\begin{align*}\begin{split}
			\textbf{A}_i(E)&:\ O_i\text{ is $(\kappa_s,E)$-bad},\\
			\textbf{B}_i(E)&:\ \exists U_i\in \mathscr{T}_{s,\bm x_i}\text{ s.t.,  $U_i$ is $E$-R, where $\mathscr{T}_{s,\bm x_i}$ is defined in \eqref{Tsx}},\\
			\textbf{C}_i(E)&:\ O_i\text{ contains four pairwise disjoint $(\kappa_{s-1},E)$-bad $L_{s-1}$-cubes},\\
			\textbf{D}&:\ \exists E\in I\text{ so that both $O_1$ and $O_2$ are $(\kappa_s,E)$-bad}.
		\end{split}   
	\end{align*}
	Using the {\bf Coupling Lemma} (cf. Lemma \ref{cl})  yields
	\begin{align}
			\nonumber\mathbb{P}\left\{\textbf{D}\right\}&\le\Po\left\{\bigcup_{E\in I}(\textbf{A}_1(E)\cap\textbf{A}_2(E))\right\}\le \mathbb{P}\left\{\bigcup_{E\in I}((\textbf{B}_1(E)\cup\textbf{C}_1(E))\cap(\textbf{B}_2(E)\cup\textbf{C}_2(E)))\right\}\\
			\nonumber&\le\Po\left\{\bigcup_{E\in I}(\textbf{B}_1(E)\cap\textbf{B}_2(E))\right\}+\Po\left\{\bigcup_{E\in I}(\textbf{B}_1(E)\cap\textbf{C}_2(E))\right\}\\
			\nonumber&\ \ +\Po\left\{\bigcup_{E\in I}(\textbf{C}_1(E)\cap\textbf{B}_2(E))\right\}+\Po\left\{\bigcup_{E\in I}(\textbf{C}_1(E)\cap\textbf{C}_2(E))\right\}\\
			\label{x6}&\le \Po\left\{\bigcup_{E\in I}(\textbf{B}_1(E)\cap\textbf{B}_2(E))\right\}+3\Po\left\{\bigcup_{E\in I}\textbf{C}_1(E)\right\}.
	\end{align}
	From  the validity of \eqref{P1e} (with $s$ replacing by $s-1$), $p>5d$, $L_s=[L_{s-1}^{\alpha}]$ and $\frac{5}{4}< \alpha<\frac{2p}{2d+p}$, we obtain 
	\begin{align}\label{x7}
		\begin{split}
			\Po\left\{\bigcup_{E\in I}\textbf{C}_1(E)\right\}\le C(d)L_s^{4d}(L_{s-1}^{-2p})^2\le\frac{1}{4}L_s^{-2p}.
		\end{split}
	\end{align}
	From  the validity of  \eqref{P2e}, we have 
	\begin{align}\label{D1D2}
		\Po\left\{\bigcup_{E\in I}(\textbf{B}_1(E)\cap\textbf{B}_2(E))\right\}\le \frac{1}{4}L_s^{-2p}.
	\end{align}
	Combining \eqref{x6}, \eqref{x7} and \eqref{D1D2} implies  $\Po\{\textbf{D}\}\le L_s^{-2p}$.
	
	This concludes the proof of Theorem \ref{ite}.
\end{proof}
\subsection{Proof of Theorem \ref{thm1}}

In this subsection, we complete the proof of Theorem \ref{thm1} via combining  Theorems \ref{P1}, Theorem \ref{ite} and Theorem \ref{VP2}. 
\begin{proof}[Proof of Theorem \ref{thm1}]
Note first that $\kappa_{\infty}<\kappa_s\le \kappa_{s-1}\le \kappa_0\le \frac{\g}{5}$.  Then  it suffices to prove  $({\bf P1})_{s}$ for all $s\geq 0$. In fact, from Theorem \ref{P1}, we know $({\bf P1})_{0}$ holds true. The validity of   $({\bf P2})_{s}$  for all $s\geq1$ is guaranteed by Theorem \ref{VP2}. Then applying Theorem \ref{ite} implies $({\bf P1})_{1}$ holds true. Repeating this procedure leads to the desired proof, i.e., $({\bf P1})_{s}$ holds true  for all $s\geq 0$.

%Applying Theorem \ref{P1} with $L_0\ge \underline{L}_0$, $\ep_0=\ep_0(\beta,\mu,L_0,d,\lambda,\phi,p)>0$ and $\et=\et(\beta,\mu,L_0,d,\lambda,p)>0$ yields
%\begin{align*}
%	\Po\{\exists E\in[E_0-\et,E_0+\et]\text{ s.t. both $B_{L_0}(\bm x)$ and $B_{L_0}(\bm y)$ are $(\kappa_0,E)$-bad}\}\le L_0^{-2p}
%\end{align*}
%for all $\|\bm x-\bm y\|>2L_0$ and $0<\ep<\ep_0$.

%Assume that for some $s\ge1$, \eqref{P1e} holds. Then applying Theorem \ref{ite} and Theorem \ref{VP2} yields
%\begin{align*}
	%\Po\{\exists E\in[E_0-\et,E_0+\et]\text{ s.t. both $B_{L_s}(\bm x)$ and $B_{L_s}(\bm y)$ are $(\kappa_s,E)$-bad}\}\le L_s^{-2p}
%\end{align*}
%for all $\|\bm x-\bm y\|>2L_s$. 

\end{proof}

\section{ Proof of theorem \ref{thm2}}\label{pot2}
In this section, we aim to prove the localization theorem (cf. Theorem \ref{thm2}) by using Theorem \ref{thm1} and the Shnol's theorem concerning generalized eigenvalues (and generalized eigenfunctions). This scheme was first introduced in  \cite{FMSS85} to prove Anderson localization for the Anderson model on $\Z^d.$

Let $\psi=\{\psi(\bm x)\}_{\bm x\in\Z^d}\in\C^{\Z^d}$ satisfy $H_{\omega}\psi=E\psi$. Assume further the Green's function  $\mathcal{G}_{B}^E$ exists for some $B\subset\Z^d$. Then for any $\bm x\in B$, we have the Poisson's identity
\begin{align}\label{PI}
	\psi(\bm x)=-\ep\sum_{\bm x'\in B,\bm x''\notin B}\mathcal{G}_{B}^E(\bm x,\bm x')\cdot\Gamma_{\phi}(\bm x',\bm x'')\cdot\psi(\bm x'').
\end{align}

We begin with the following definition.

\begin{defn}\label{yyy2}
An energy $E\in\mathbb{R}$ is called generalized eigenvalue, if there exists some $\psi\in{\C}^{{\Z}^d}$ satisfying $\psi(\bm 0)=1$, $ |\psi(\bm x)|\le (1+\|\bm x\|)^d$ and $H_{\omega}\psi=E\psi$. We call such $\psi$ the generalized eigenfunction.
\end{defn}

We  need   the following     Shnol’s Theorem, which applies to  the long-range operator.  
 
\begin{lem}[\cite{Han19}]
	Let $\mathscr{E}_{\omega}$ be the set of all generalized eigenvalues of $H_{\omega}$. Then we have $\mathscr{E}_{\omega}\subset\sigma(H_{\omega})$, $\nu_{\omega}(\sigma(H_{\omega})\setminus\mathscr{E}_{\omega})=0$, where $\nu_{\omega}$ denotes some complete spectral measure of $H_{\omega}$.
\end{lem}

In what follows,  we fix $L_0=\underline{L}_0$, $\ep_0$, $\et$ and $I=[E_0-\et,E_0+\et]$ in Theorem \ref{thm1}.

From Theorem \ref{thm1}, we have for $0<\ep<\ep_0$ and $s\ge0$,
	\begin{align}\label{2bad}
	\Po\{\exists E\in I\text{ s.t.,  both $B_{L_s}(\bm x)$ and $B_{L_s}(\bm y)$ are $(\kappa_\infty,E)$-bad}\}\le L_s^{-2p}
\end{align}
for all $\|\bm x-\bm y\|>2L_s$, where $L_{s+1}=[L_s^{\alpha}]$.

We then prove our main result on localization.

\begin{proof}[Proof of Theorem \ref{thm2}]
	For any $s\ge0$,  define $A_{s+1}=B_{10L_{s+1}}\setminus B_{3L_s}$ and the event
	\begin{align*}
		\textbf{E}_s:\ \exists E\in I, {\rm s.t., for}\ \forall \bm y\in A_{s+1},\ \text{both $B_{L_s}$ and $B_{L_s}(\bm y)$ are $(\kappa_{\infty},E)$-bad}.
	\end{align*}
	Thus from $p>5d$, $\alpha\in\left(\frac{5}{4},\frac{2p}{p+2d}\right)$ and \eqref{2bad}, it follows that 
	\begin{align*}
		\Po\{\textbf{E}_s\}&\le (20L_{s+1}+1)^d L_s^{-2p}\le C(d) L_{s}^{-2p+\alpha d},\\
		\sum_{s\ge0}\Po\{\textbf{E}_s\}&\le\sum_{s\ge0}C(d) L_{s}^{-2p+\alpha d}<\infty.
	\end{align*}
	Then by the Borel-Cantelli lemma, we have $\Po\{\textbf{E}_s\text{ occurs infinitely often}\}=0$. We define  $\Omega_0$ to be the event so that  $\textbf{E}_s$ occurs only finitely often. Then  $\Po(\Omega_0)=1$.
	
	Let $E\in I$ be a generalized eigenvalue and $\psi$ be its generalized eigenfunction. In particular $\psi(\bm 0)=1$. Suppose now there exist infinitely many $L_s$ so that all  $B_{L_s}$ are $(\kappa_{\infty},E)$-good. Then from the Poisson's indentity \eqref{PI} and \eqref{ENR}--\eqref{y6}, we obtain
	\begin{align*}
		1=|\psi(\bm 0)|&\le \sum_{\bm x'\in B_{L_s},\bm x''\notin B_{L_s}}|\mathcal{G}_{B_{L_s}}^E(\bm 0,\bm x')|\cdot e^{-\g\log^{\rho}(1+\|\bm x'-\bm x''\|)}\cdot(1+\|\bm x''\|)^d\\
		&\le (\text{I})+(\text{II}),
	\end{align*}
	where
	\begin{align*}
		(\text{I})&=\sum_{\|\bm x'\|\le L_{s}^{\frac{4}{5}},\|\bm x''\|\ge L_s}e^{\log^{\rho'}L_s}\cdot e^{-\g\log^{\rho}(1+\|\bm x'-\bm x''\|)}\cdot(1+\|\bm x''\|)^d,\\
		(\text{II})&=\sum_{L_{s}^{\frac{4}{5}}<\|\bm x'\|\le L_s ,\|\bm x''\|\ge L_s}e^{-\kappa_\infty\log^{\rho}(1+\|\bm x'\|)}\cdot e^{-\g\log^{\rho}(1+\|\bm x'-\bm x''\|)}\cdot(1+\|\bm x''\|)^d.
	\end{align*}
For (I), we get since $\|\bm x'-\bm x''\|\ge \|\bm x''\|-L_s^{\frac{4}{5}}\ge\frac{1}{2}\|\bm x''\|\ge\frac{1}{2}L_s$,
\begin{align*}
	(\text{I})&\le \sum_{\|\bm x''\|\ge L_s}(2L_s+1)^de^{\log^{\rho'}L_s}e^{-\g\log^{\rho}(1+\frac{\|\bm x''\|}{2})}(1+\|\bm x''\|)^d\\
	&\le (2L_s+1)^de^{\log^{\rho'}L_s}e^{-\frac{\g}{2}\log^{\rho}(1+\frac{L_s}{2})}\to0\ (\text{as $L_s\rightarrow\infty$}).
\end{align*}
For (II), we have by using \eqref{qua},
\begin{align*}
	(\text{II})&\le \sum_{\|\bm x''\|\ge L_s}(2L_s+1)^de^{-\kappa_\infty\log^{\rho}(1+\|\bm x''\|)+\g C(\rho)\log^{\rho}2}(1+\|\bm x''\|)^d\\
	&\le (2L_s+1)^de^{-\frac{\kappa_{\infty}}{2}\log^{\rho}(1+L_s)}\rightarrow0\ (\text{as $L_s\to\infty$}).
\end{align*}
This implies that for any generalized eigenvalue $E$, there exist only finitely many $L_s$ so that $B_{L_s}$ is $(\kappa_\infty,E)$-good.

In the following,  we fix $\omega\in\Omega_0$.

 From the above arguments, we have shown that there exists some $s_0(\omega)>0$ such that for $\forall s\ge s_0$, all $B_{L_s}(\bm x)$ with $\bm x\in A_{s+1}$ are $(\kappa_\infty,E)$-good. We define  $\tilde{A}_{s+1}=B_{8L_{s+1}}\setminus B_{2L_s}$ which satisfies  $\tilde{A}_{s+1}\subset A_{s+1}$. We will show for $s\ge s_1(\beta,d,\g,\rho,\rho',\alpha,\kappa_\infty,\omega)>0$ the following holds true:
 \begin{align}\label{decay}
 	|\psi(\bm x)|\le e^{-\frac{\kappa_{\infty}}{2\alpha^{\rho}}\log^{\rho}(1+\|\bm x\|)}\text{ for $\bm x\in\tilde{A}_{s+1}$}.
 \end{align}
 Once \eqref{decay} is established for all $s\ge s_1$, it follows from $\bigcup_{s\ge s_1}\tilde{A}_{s+1}=\{\bm x\in\Z^d:\ \|\bm x\|\ge 2L_{s_1}\}$ that 
 \begin{align*}
 	|\psi(\bm x)|\le e^{-\frac{\kappa_{\infty}}{2\alpha^{\rho}}\log^{\rho}(1+\|\bm x\|)}\ \text{for $\|\bm x\|\ge 2L_{s_1}$.}
 \end{align*}
 This then  implies that  $H_{\omega}$ exhibits localization on $I$. 
 
 We then prove \eqref{decay}. Note that $\omega\in\Omega_0$ and $\bm x\in\tilde{A}_{s+1}\subset A_{s+1}$. We know that $B_{L_s}(\bm x)\subset A_{s+1}$ is $(\kappa_\infty,E)$-good (cf. \eqref{y6}). Applying  \eqref{PI} again gives 
 \begin{align*}
 	|\psi(\bm x)|&\le \sum_{\bm x'\in B_{L_s}(\bm x),\bm x''\notin B_{L_s}(\bm x)}|\mathcal{G}_{B_{L_s}}^E(\bm x,\bm x')|\cdot e^{-\g\log^{\rho}(1+\|\bm x'-\bm x''\|)}\cdot(1+\|\bm x''\|)^d\\
 	&\le (\text{III})+(\text{IV}),
 \end{align*}
 where
 	\begin{align*}
 	(\text{III})&=\sum_{\|\bm x-\bm x'\|\le L_{s}^{\frac{4}{5}},\|\bm x-\bm x''\|\ge L_s}e^{\log^{\rho'}L_s}\cdot e^{-\g\log^{\rho}(1+\|\bm x'-\bm x''\|)}\cdot(1+\|\bm x''\|)^d,\\
 	(\text{IV})&=\sum_{L_{s}^{\frac{4}{5}}<\|\bm x-\bm x'\|\le L_s ,\|\bm x-\bm x''\|\ge L_s}e^{-\kappa_\infty\log^{\rho}(1+\|\bm x-\bm x'\|)}\cdot e^{-\g\log^{\rho}(1+\|\bm x'-\bm x''\|)}\cdot(1+\|\bm x''\|)^d.
 \end{align*}
 For (III),  from  $\|\bm x'-\bm x''\|\ge \|\bm x-\bm x''\|-L_s^{\frac{4}{5}}\ge\frac{1}{2}\|\bm x-\bm x''\|\ge\frac{1}{2}L_s$, $ 2L_s\le \|\bm x\|\le 8L_{s+1}$ and $(1+\|\bm x''\|)^d\le (1+\|\bm x\|)^d(1+\|\bm x-\bm x''\|)^d$, we have
 \begin{align*}
 (\text{III})&\le \sum_{\|\bm x-\bm x''\|\ge L_s}(2L_s+1)^de^{\log^{\rho'}L_s}e^{-\g\log^{\rho}(1+\frac{\|\bm x-\bm x''\|}{2})}(1+\|\bm x''\|)^d\\
 &\le (2L_s+1)^de^{\log^{\rho'}L_s}e^{-\frac{\g}{2}\log^{\rho}(1+\frac{L_s}{2})}(1+\|\bm x\|)^d\\
 &\le e^{-\frac{\g}{4}\log^{\rho}(1+\frac{L_s}{2})}\le \frac{1}{2}e^{-\frac{\kappa_{\infty}}{2\alpha^{\rho}}\log^{\rho}(1+\|\bm x\|)}.
 \end{align*}
 For (IV), we have by \eqref{qua}, $2L_s\le \|\bm x\|\le 8L_{s+1}$ and $(1+\|\bm x''\|)^d\le (1+\|\bm x\|)^d(1+\|\bm x-\bm x''\|)^d$,
 \begin{align*}
 (\text{IV})&\le \sum_{\|\bm x-\bm x''\|\ge L_s}(2L_s+1)^de^{-\kappa_\infty\log^{\rho}(1+\|\bm x-\bm x''\|)+\g C(\rho)\log^{\rho}2}(1+\|\bm x''\|)^d\\
 &\le (2L_s+1)^de^{-\frac{3\kappa_{\infty}}{4}\log^{\rho}(1+L_s)}(1+\|\bm x\|)^d\\
 &\le \frac{1}{2}e^{-\frac{\kappa_{\infty}}{2\alpha^{\rho}}\log^{\rho}(1+\|\bm x\|)}.
 \end{align*}
 Combining the above estimates implies  for $\forall \bm x\in\tilde{A}_{s+1}$,
 \begin{align*}
 	|\psi(\bm x)|\le e^{-\frac{\kappa_{\infty}}{2\alpha^{\rho}}\log^{\rho}(1+\|\bm x\|)}.
 \end{align*}
 
 We complete the proof. 
\end{proof}

%\newpage

\appendix{}
\section{}\label{APPqua}
\begin{proof}[Proof of Lemma \ref{qua}]
Let 
	\begin{align*}
	F(x_1,x_2,\cdots,x_n)=\log^{\rho}(1+\sum\limits_{i=1}^n x_i)-\sum\limits_{i=1}^n\log^{\rho}(1+x_i),\ (x_1,x_2,\cdots, x_n)\in[0,+\infty)^n.
\end{align*}
	Direct computation shows for $1\leq i\leq n,$
	\begin{align}
		\nonumber \partial_{x_i}F(x_1,\cdots, x_{i},\cdots,x_n)&=\rho\left(\frac{\log^{\rho-1}(1+\sum\limits_{i=1}^n x_i)}{1+\sum\limits_{i=1}^n x_i}-\frac{\log^{\rho-1}(1+x_i)}{1+x_i}\right)\\
		\label{Fh}&=\rho(h(1+\sum\limits_{i=1}^n x_i)-h(1+x_i)),
	\end{align}
	where  $h(s)=\dfrac{\log^{\rho-1}(s)}{s}$ ($s\geq 1$).  
	%We observe that 
	Then 
	\begin{align*}
		\frac{dh}{ds}=\frac{\log^{\rho-2}(s)\left(\rho-1-\log s\right)}{s^2},
	\end{align*}
which  implies that  $h(s)$ is  increasing  in $(1,e^{\rho-1})$ and decreasing  in $(e^{\rho-1},+\infty)$.  Thus this combined with  \eqref{Fh} deduces that if   $x_i\geq e^{\rho-1}-1$ for some $1\leq i\leq n$,  then 
	\begin{align}\label{Fx}
		 \partial_{x_i}F(x_1,\cdots,x_i,\cdots,x_n)\leq 0. 
	\end{align}
	
	Next, we assert that
	\begin{align*}
		\sup\limits_{(x_1,\cdots,x_n)\in [0,+\infty)^n}F(x_1,\cdots,x_n)=\sup\limits_{(x_1,\cdots,x_n)\in [0,e^{\rho-1}]^n}F(x_1,\cdots,x_n).
	\end{align*}
	For $x\in [0,+\infty)$, define  $x_*=\min\{x,e^{\rho-1}\}$. If $x_j>e^{\rho-1}$ for some $1\le j\le n$, then  combining  the mean value theorem  and \eqref{Fx} yields that  there  exists  some $\xi\in (e^{\rho-1},x_j)$ such that
	\begin{align*}
			&\ \  F(x_1,\cdots,x_{j-1},x_j,x_{j+1},\cdots,x_n)-F(x_1,\cdots,x_{j-1},e^{\rho-1},x_{j+1},\cdots,x_n)\\
			=&\ \ \partial_{x_j}F(x_1,\cdots,x_{j-1},\xi,x_{j+1},\cdots,x_n)(x_j-e^{\rho-1})\le0.
 \end{align*}
	Therefore
	\begin{align*}
	F(x_1,\cdots,x_{j-1},x_j,x_{j+1},\cdots,x_n)\le F(x_1,\cdots,x_{j-1},(x_j)_*,x_{j+1},\cdots,x_n),
\end{align*}
	which implies 
	\begin{align*}
	\sup\limits_{(x_1,\cdots,x_n)\notin [0,e^{\rho-1}]^n}F(x_1,\cdots,x_n)\le \sup\limits_{(x_1,\cdots,x_n)\in [0,e^{\rho-1}]^n}F(x_1,\cdots,x_n)
\end{align*}
and
	\begin{align*}
	\sup\limits_{(x_1,\cdots,x_n)\in [0,+\infty)^n}F(x_1,\cdots,x_n)=\sup\limits_{(x_1,\cdots,x_n)\in [0,e^{\rho-1}]^n}F(x_1,\cdots,x_n).
\end{align*}

Finally,  since  $F$  is continuous on $[0,e^{\rho-1}]^n$, there  exists some $(x_1^*,\cdots,x_n^*)\in[0,e^{\rho-1}]^n$ such that 
	\begin{align*}
		F(x_1^*,\cdots,x_n^*)=\sup\limits_{(x_1,\cdots,x_n)\in [0,e^{\rho-1}]^n}F(x_1,\cdots,x_n).
	\end{align*}
 Define $r=\#\{1\le i\le n:\ x_i^*\ne0\}$.
If $r\le 1$, we have
\begin{align*}
 F(x_1^*,\cdots,x_n^*)=0\le C(\rho)\log^{\rho} n,
\end{align*}
which implies \eqref{quaeq}. 
If $r\ge 2$,  without loss of generality, we can  assume that $x_i^{*}\in(0,e^{\rho-1}]$ for $1\le i\le r$ and $x_{j}^*=0$ for $r+1\le j\le n$.  Define $G(x_1,\cdots, x_r)=F(x_1,\cdots,x_r,0,\cdots,0)$. We know  $(x_1^*,\cdots,x_r^*)\in (0, +\infty)^r$ is the maximum point of $G(x_1,\cdots,x_r)$ on $[0,+\infty)^r$. Hence  we get 
%Since  maximum point, we have
	\begin{align}\label{vv2}
	\partial_{x_i}F(x_1^*,\cdots,x_r^*,0,\cdots,0)=\partial_{x_i}G(x_1^*,\cdots,x_r^*)=0,\  1\le i\le r.
\end{align}
If in addition there exist some $1\le i\ne j\le r$ such that $x_i^*\ne x_j^*$, then  it follows from (\ref{vv2}) that 
\begin{align*}
	h(1+\sum\limits_{i=1}^n x_i^*)=h(1+x_i^*)=h(1+x_j^*),
\end{align*}
	 which means the equation $h(s)=a$ has  three distinct roots in $(1,+\infty)$ for some $a$. This contradicts with the fact that   $h(s)$ is  increasing  in $(1,e^{\rho-1})$ and decreasing  in $(e^{\rho-1},+\infty)$. Therefore, we must have $x_1^*=x_2^*=\cdots=x_r^*\in(0,e^{\rho-1}]$. Moreover, we obtain
	\begin{align*}
F(x_1^*,\cdots,x_r^*,0,\cdots,0)&=\log^{\rho}(rx_1^*+1)-r\log^{\rho}(x_1^*+1)\\
			&\le \log^{\rho}(ne^{\rho-1}+1)\le C(\rho)\log^{\rho}n,
	\end{align*}
which also implies \eqref{quaeq}.  We complete the proof of Lemma \ref{qua}. 
\end{proof}

\section{}\label{APPdan}
\begin{proof}[Proof of Lemma \ref{neighbor}]
	Let $\bm x=(x_1,\cdots,x_d)\in\Z^d$ and $\bm z=(z_1,\cdots,z_d)\in B_{L_s}(\bm x)$. For $1\le k\le d$, we define
	\begin{align}\label{hatz}
		\hat{z}_k=\left\{\begin{array}{ll}
			z_{k}, &	|x_{k}-z_{k}|\le L_{s}-L_{s-1},\\
			x_{k}+L_{s}-L_{s-1}, & z_k-x_{k}>L_{s}-L_{s-1},\\
			x_{k}-L_{s}+L_{s-1}, & x_{k}-z_{k}>L_{s}-L_{s-1}.
		\end{array}\right.
	\end{align}
	Let $\hat{\bm z}=(\hat{z}_1,\cdots,\hat{z}_d)$. Then $\|\bm z-\hat{\bm z}\|\le L_{s-1}$ and $\|\hat{\bm z}-\bm x\|\le L_{s}-L_{s-1}$, which implies
	\begin{align*}
		\bm z\in B_{L_{s-1}}(\hat{\bm z})\subset B_{L_s}(\bm x).
	\end{align*}

	If $\bm y=(y_1,\cdots,y_d)\in B_{L_s}(\bm x)\setminus B_{L_{s-1}}(\hat{\bm z})$, there is some $1\le k'\le d$ such that $|y_{k'}-\hat{z}_{k'}|=\|\bm y-\hat{\bm z}\|\ge L_{s-1}+1$. We then prove $\|\bm y-\bm z\|\ge \|\bm y-\hat{\bm z}\|\ge L_{s-1}+1$. We divide the discussion into the following cases:
			\begin{itemize}
		\item[\textbf{Case 1}:] $|z_{k'}-x_{k'}|\le L_s-L_{s-1}$. From \eqref{hatz} and $\bm y=(y_1,\cdots,y_d)\in B_{L_s}(\bm x)\setminus B_{L_{s-1}}(\hat{\bm z})$, we have
		\begin{align*}
			|y_{k'}- z_{k'}|=|y_{k'}-\hat{z}_{k'}|\ge L_{s-1}+1.
		\end{align*}
	
		\item[\textbf{Case 2}:] $z_{k'}-x_{k'}>L_{s}-L_{s-1}$. By \eqref{hatz} and $\bm y=(y_1,\cdots,y_d)\in B_{L_s}(\bm x)\setminus B_{L_{s-1}}(\hat{\bm z})$,  we also have 
		\begin{align*}
		|y_{k'}-x_{k'}-L_{s}+L_{s-1}|=|y_{k'}-\hat{z}_{k'}|\ge L_{s-1}+1.
	\end{align*}
		 Moreover, if $y_{k'}-x_{k'}-L_{s}+L_{s-1}\ge L_{s-1}+1$,  then 
		 \begin{align*}
		 y_{k'}-x_{k'}\ge L_{s}+1,
		\end{align*}
		  which contradicts with $\bm y\in B_{L_s}(\bm x)$. Hence
		  \begin{align*}
		  y_{k'}-z_{k'}<y_{k'}-x_{k'}-L_{s}+L_{s-1}=y_{k'}-\hat{z}_{k'}\le -(L_{s-1}+1)
		\end{align*}
		   and
		  \begin{align*}
		  |y_{k'}-z_{k'}|> |y_{k'}-\hat{z}_{k'}|\ge L_{s-1}+1.
		  \end{align*}
		\item[\textbf{Case 3}:] $x_{k'}-z_{k'}>L_{s}-L_{s-1}$.  Similar to the analysis in the  {\bf Case 2}, we  get 
		 \begin{align*}
			|y_{k'}-z_{k'}|> |y_{k'}-\hat{z}_{k'}|\ge L_{s-1}+1.
		\end{align*}
	\end{itemize}
		Therefore, $\|\bm y-\bm z\|\ge |y_{k'}- z_{k'}|=\|\bm y-\hat{\bm z}\|\ge L_{s-1}+1$.
		
		Now, if $\bm y=(y_1,\cdots,y_d)\in B_{L_s}(\bm x)\cap B_{L_{s-1}}(\bm z)$, then $|y_{k}-z_{k}|\le L_{s-1}$ for all $1\le k\le d$. We will show that $\|\bm y-\hat{\bm z}\|\le L_{s-1}$. We also  have the following  cases for $1\le k\le d$:
					\begin{itemize}
			\item [\textbf{Case 1}:] $|z_{k}-x_{k}|\le L_{s}-L_{s-1}$. Since \eqref{hatz} and $|y_{k}-z_{k}|\le L_{s-1}$, we obtain
			\begin{align*}
				|y_{k}- z_{k}|=|y_{k}-\hat{z}_{k}|\le L_{s-1}.
			\end{align*} 
			\item [\textbf{Case 2}:] $z_{k}-x_{k}>L_{s}-L_{s-1}$. By \eqref{hatz} and $|y_{k}-z_{k}|\le L_{s-1}$, we have 
			\begin{align*}
				z_k-L_{s-1}\le y_k\le x_{k}+L_{s}.
			\end{align*}
			Hence
			\begin{align*}
				-L_{s-1}<z_k-x_{k}-L_{s}\le y_k-\hat{z}_k=y_k-x_{k}-L_{s}+L_{s-1}\le L_{s-1},
			\end{align*}
			which implies $|y_k-\hat{z}_k|\le L_{s-1}$.
			\item [\textbf{Case 3}:] $x_{k}-z_{k}>L_s-L_{s-1}$. From a similar argument in the above case, we get 
			\begin{align*}
				|y_{k}-\hat{z}_{k}|\le L_{s-1}.
			\end{align*}
		\end{itemize}
		Thus $\|\bm y-\hat{\bm z}\|=\max\limits_{1\le k\le d}|y_k-\hat{z}_k|\le L_{s-1}$.

This finishes the proof of Lemma \ref{neighbor}. 
\end{proof}

\begin{proof}[Proof of Lemma \ref{3kuai}]
	For $1\le i\le 3$, the construction of $\bm w_i^{*}$ is similar to that of \eqref{hx}, so we omit the detals. %The proof of \eqref{dhx3} can also refer to that of \eqref{dhx1}.

If $\bm z\in B_{L_s}(\bm x)\setminus\cup_{i=1}^{3} B_{10L_{s-1}}(\bm w_i^*)$, by \eqref{dhx3}, we have
\begin{align}\label{901}
	\|\bm z-\bm w_i\|\ge10 L_{s-1}+1,\ 1\le i\le 3.
\end{align}
From \eqref{hx} and \eqref{901}, we can get
\begin{align*}
	\|\hat{\bm z}-\bm w_i\|\ge\|\bm z-\bm w_i\|-\|\bm z-\hat{\bm z}\|\ge9L_{s-1}+1,\ 1\le i\le 3,
\end{align*}
	which implies $B_{L_{s-1}}(\hat{\bm z})\cap B_{L_{s-1}}(\bm w_i)=\emptyset$, $1\le i\le 3$. Since  there are at most three pairwise disjoint $(\kappa_{s-1},E)$-bad $L_{s-1}$-cubes in $B_{L_s}(\bm x)$, we have that  $B_{L_{s-1}}(\hat{\bm z})$ is $(\kappa_{s-1},E)$-good.
	
	This proves Lemma \ref{3kuai}. 
\end{proof}
\begin{proof}[Proof of Lemma \ref{dancub}]
	We  divide the discussion into the following  cases: 
	\begin{itemize}
		\item [\textbf{Case 1}:] $\dist(B_{10L_{s-1}}(\bm w_{i}^{*}),B_{10L_{s-1}}(\bm w_{j}^{*}))\ge 10L_{s-1}$. For $1\le i\ne j\le 3$, we define  
		$$B_{i}=B_{10L_{s-1}}(\bm w_{i}^{*})$$ with $l_i=\diam(B_i)=20L_{s-1}, 1\le i\le 3.$
		\item [\textbf{Case 2}:] %There are two cubes $B_{10L_{s-1}}(\bm w_{1}^{*}),B_{10L_{s-1}}(\bm w_{2}^{*})$ satisfying
		 Without loss of generality, we can assume
		 \begin{align*}
		 \dist(B_{10L_{s-1}}(\bm w_{1}^{*}),B_{10L_{s-1}}(\bm w_{2}^{*}))< 10L_{s-1}.
		\end{align*}
		  In this case, there are some $\bm y_1\in B_{10L_{s-1}}(\bm w_{1}^{*})$ and $\bm y_2\in B_{10L_{s-1}}(\bm w_{2}^{*})$ such that $\|\bm y_1-\bm y_2\|<10L_{s-1}$.  Then 
		\begin{align}\label{902}
			\|\bm w_1^{*}-\bm w_2^{*}\|\le \|\bm w_1^*-\bm y_1\|+\|\bm y_1-\bm y_2\|+\|\bm y_2-\bm w_2^*\|<30L_{s-1}.
		\end{align}
		 Let $\bm w_{i}^{*}=(w_{i,1},\cdots,w_{i,d})$, $1\le i\le3$. We define for $1\le k\le d$,
		\begin{align}\label{903}
			v_{1,k}=\left\{\begin{array}{ll}
				\frac{w_{1,k}+w_{2,k}}{2}, & \text{if $w_{1,k}+w_{2,k}\equiv0\ ({\rm mod}\ 2)$},\\
				\frac{w_{1,k}+w_{2,k}+1}{2}, & \text{if $w_{1,k}+w_{2,k}\equiv1\ ({\rm mod}\ 2)$},
			\end{array}\right.
		\end{align}
		and $\bm v_{1}=(v_{1,k},\cdots,v_{1,d})$. Then by \eqref{902} and \eqref{903}, we get
		\begin{align}\label{904}
			\|\bm v_1-\bm w_{1}^{*}\|\le15L_{s-1}\text{ and }\|\bm v_1-\bm w_{2}^{*}\|\le15L_{s-1}.
		\end{align}
		 Define again for $1\le k\le d$,
			\begin{align}\label{905}
			v_{2,k}=\left\{\begin{array}{ll}
				v_{1,k}, &	|v_{1,k}-x_{k}|\le L_s-40L_{s-1},\\
				x_k+L_s-40L_{s-1}, & v_{1,k}-x_k>L_s-40L_{s-1},\\
				x_k-L_s+40L_{s-1}, & x_k-v_{1,k}>L_s-40L_{s-1},
			\end{array}\right.
		\end{align}
		and denote  $\bm v_{2}=(v_{2,1},\cdots,v_{2,d})$.  Then $\|\bm v_{2}-\bm x\|\le L_s-40L_{s-1}$. Next, we will show $\|\bm v_{2}-\bm w_1^{*}\|\le 30L_{s-1}$ and $\|\bm v_{2}-\bm w_2^{*}\|\le 30L_{s-1}$. For this purpose,  we have the following cases  for $1\le k\le d$. 
		\begin{itemize}
			\item [\textbf{Case i}:] $|v_{1,k}-x_{k}|\le L_s-40L_{s-1}$. From \eqref{904} and \eqref{905}, we obtain
			\begin{align*}
				|v_{2,k}-w_{1,k}|=|v_{1,k}-w_{1,k}|\le15L_{s-1}
			\end{align*}
			and 
			\begin{align*}
				|v_{2,k}- w_{2,k}|=|v_{1,k}-w_{2,k}|\le15L_{s-1}.
			\end{align*}
			\item [\textbf{Case ii}:] $v_{1,k}-x_k>L_s-40L_{s-1}$. Since  \eqref{903} and $v_{1,k}-x_k>L_s-40L_{s-1}$, there exists some $a\in \{w_{1,k},w_{2,k}\}$ such that
			\begin{align*}
				L_s-10L_{s-1}\ge a-x_k\ge L_s-40L_{s-1}.
			\end{align*}
			Without loss of generality, we assume that 
			\begin{align*}
			L_s-10L_{s-1}\ge w_{1,k}-x_k\ge L_s-40L_{s-1}.
		\end{align*}
			 Since $|w_{1,k}-w_{2,k}|\le \|\bm w_{1}^{*}-\bm w_{2}^{*}\|<30L_{s-1}$ (cf. \eqref{902}), we also have 
			 \begin{align*}
			 	L_s-10L_{s-1}\ge w_{2,k}-x_k\ge L_s-70L_{s-1}.
			 \end{align*}
			  Therefore,
			\begin{align*}
				|v_{2,k}-w_{1,k}|\le 30L_{s-1}\ \text{and}\ |v_{2,k}-w_{2,k}|\le 30L_{s-1}.
			\end{align*}
			\item [\textbf{Case iii}:] $x_k-v_{1,k}>L_{s}-40L_{s-1}$. By a similar argument  as in  \textbf{Case ii}, we can get 
			\begin{align*}
				|v_{2,k}-w_{1,k}|\le 30L_{s-1}\ \text{and}\ |v_{2,k}-w_{2,k}|\le 30L_{s-1}.
			\end{align*}
		\end{itemize}
		Combining the estimates in the above three cases leads to  
		\begin{align*}
			\|\bm v_2-\bm w_1^{*}\|=\max_{1\le k\le d}|v_{2,k}-w_{1,k}|\le 30L_{s-1}
		\end{align*}
		and
		\begin{align*}
				\|\bm v_2-\bm w_2^{*}\|=\max_{1\le k\le d}|v_{2,k}-w_{2,k}|\le 30L_{s-1}.
		\end{align*}
		Therefore,
		\begin{align*}
			B_{10L_{s-1}}(\bm w_1^{*})\cup B_{10L_{s-1}}(\bm w_2^{*})\subset B_{40L_{s-1}}(\bm v_2)\subset B_{L_s}(\bm x).
		\end{align*}
	Similar to the construction of $B_{40L_{s-1}}(\bm v_2)$ (cf. \eqref{905}), there is a $\bm v_3$ such that 
	\begin{align*}
			B_{10L_{s-1}}(\bm w_3^{*})\subset B_{40L_{s-1}}(\bm v_3)\subset B_{L_s}(\bm x).
		\end{align*}
		To finish the construction of $B_i$, it remains to deal with the following sub-cases  of {\bf Case 2}. 
		\item [\textbf{Case 2-1}:] $\dist(B_{40L_{s-1}}(\bm v_2),B_{40L_{s-1}}(\bm v_3))\ge 10L_{s-1}$. In this sub-case, we define %the \textbf{dangerous cubes} as 
		$$B_{1}=B_{40L_{s-1}}(\bm v_2),\ B_{2}=B_{10L_{s-1}}(\bm w_3^{*}),\ B_3=\emptyset$$
		 with $l_1=\diam(B_1)=80L_{s-1}$, $l_2=\diam(B_2)=4L_{s-1}$, $l_3=0$.
		\item [\textbf{Case 2-2}:] $\dist(B_{40L_{s-1}}(\bm v_2),B_{40L_{s-1}}(\bm v_3))<10L_{s-1}$. In  this sub-case,  similar to the  construction of $B_{40L_{s-1}}(\bm v_2)$ (cf. \eqref{905}), we can obtain that there is some $\bm v_4$ such that
		\begin{align*}
			B_{40L_{s-1}}(\bm v_2)\cup B_{40L_{s-1}}(\bm v_3)\subset B_{130L_{s-1}}(\bm v_4)\subset B_{L_s}(\bm x).
		\end{align*}
		 Then we define  %the \textbf{dangerous cubes} as 
		$$B_{1}=B_{130L_{s-1}}(\bm v_4),\ B_{2}=B_3=\emptyset $$
		with $l_1=\diam(B_1)=260L_{s-1}$, $l_2=l_3=0$.
	\end{itemize}

Finally, from the assumption of Lemma \ref{cl} (2),  it follows  that $B_i$ is $E$-NR, $1\le i\le 3$. Since $ B_{L_s}(\bm x)\setminus\Pi\subset B_{L_s}(\bm x)\setminus\bigcup_{i=1}^{3}B_{10L_{s-1}}(\bm w_i^*)$ and Lemma \ref{3kuai}, we have  that $B_{L_{s-1}}(\hat{\bm z})$ is $(\kappa_{s-1},E)$-good for $\bm z\in B_{L_s}(\bm x)\setminus \Pi$.

This completes the proof of Lemma \ref{dancub}. 
\end{proof}
\section{}\label{CN}
\begin{proof}[Proof of \eqref{countn}]
%\begin{proof}[Proof of the Claim]
 We have the following cases when $0\le k\le n-1$.
\begin{itemize}
	\item[\textbf{Case 1}:] $\bm z_{k}\in B_i$ for some $1\le i\le 3$ and $\|\bm z_{k+1}-\tilde{\bm z}_k\|\le L_{s-1}$. In this case, we obtain
\begin{align*}
	\dist(\bm z_{k+1},B_i)\le \|\bm z_{k+1}-\tilde{\bm z}_k\|\le L_{s-1},
\end{align*}
since $\tilde{\bm z}_k\in B_i$. From Lemma \ref{dancub} (1) and $\bm z_{k+1}\in B_{L_s}(\bm x)\setminus B_i$, we have
\begin{align*}
	\bm z_{k+1}\in B_{L_s}(\bm x)\setminus \Pi.
\end{align*}
Therefore,  applying Lemma \ref{dancub} (5) implies  $B_{L_{s-1}}(\hat{\bm z}_{k+1})$ is $(\kappa_{s-1},E)$-good. Next,  from  Lemma \ref{gfeg},   $\tilde{\bm z}_{k+1}=\bm z_{k+1}$, $\bm z_{k+2}\in B_{L_s}(\bm x)\setminus B_{L_{s-1}}(\hat{\bm z}_{k+1})$ and \eqref{dhx1}, we get
\begin{align*}
\|\bm z_{k+2}-\tilde{\bm z}_{k+1}\|=\|\bm z_{k+2}-\bm z_{k+1}\|\ge\|\bm z_{k+2}-\hat{\bm z}_{k+1}\|\ge L_{s-1}+1.
\end{align*}
So $k+1\in \mathscr{N}$.
\item[\textbf{Case 2}:] $\bm z_{k}\in B_i$ for some $1\le i\le 3$ and $\|\bm z_{k+1}-\tilde{\bm z}_k\|\ge L_{s-1}+1$. It is obvious that $k\in \mathscr{N}$.
\item[\textbf{Case 3:}] $\bm z_{k}\in B_{L_s}(\bm x)\setminus\Pi$. By Lemma \ref{dancub}, we know  $B_{L_{s-1}}(\hat{\bm z}_{k})$ is $(\kappa_{s-1},E)$-good. From Lemma \ref{gfeg}, $\tilde{\bm z}_{k}=\bm z_{k}$, $\bm z_{k+1}\in B_{L_s}(\bm x)\setminus B_{L_{s-1}}(\hat{\bm z}_{k})$ and \eqref{dhx1}, it follows that 
\begin{align*}
\|\bm z_{k+1}-\tilde{\bm z}_{k}\|=\|\bm z_{k+1}-\bm z_{k}\|\ge\|\bm z_{k+1}-\hat{\bm z}_{k}\|\ge L_{s-1}+1.
\end{align*}
Thus  $k\in \mathscr{N}$.
\end{itemize}
Combining the above cases  shows that   for every $0\leq k\leq n-1$,  either $k\in \mathscr{N}$ or $k+1\in \mathscr{N}$. This implies    $\# \mathscr{N}\ge\left[\frac{n}{2}\right]$.
%\end{proof}

\end{proof}

	\section*{Acknowledgments}
The authors would like to thank the editor and two referees for their helpful suggestions. Shi  was supported by the National Key R\&D Program of China (2021YFA1001600) and NSFC (12271380).
\section*{Data Availability}
The manuscript has no associated data.
\section*{Declarations}
{\bf Conflicts of interest} \ The authors  state  that there is no conflict of interest.

	\bibliographystyle{alpha}
%\bibliography{Yan.bib}

\begin{thebibliography}{CRBLD17}
\bibitem[Aiz94]{Aiz94}
M.~Aizenman.
\newblock Localization at weak disorder: some elementary bounds.
\newblock {\em Rev. Math. Phys.}, 6(5A):1163--1182, 1994.
\newblock Special issue dedicated to Elliott H. Lieb.
\bibitem[AM93]{AM93}
M.~Aizenman and S.~Molchanov.
\newblock Localization at large disorder and at extreme energies: an elementary
  derivation.
\newblock {\em Comm. Math. Phys.}, 157(2):245--278, 1993.
\bibitem[ASFH01]{ASFH01}
M.~Aizenman, J.~H. Schenker, R.~M. Friedrich, and D.~Hundertmark.
\newblock Finite-volume fractional-moment criteria for {A}nderson localization.
\newblock {\em Comm. Math. Phys.}, 224(1):219--253, 2001.
\newblock Dedicated to Joel L. Lebowitz.
\bibitem[AW15]{AW15}
M.~Aizenman and S.~Warzel.
\newblock {\em Random operators}, volume 168 of {\em Graduate Studies in
  Mathematics}.
\newblock American Mathematical Society, Providence, RI, 2015.
\newblock Disorder effects on quantum spectra and dynamics.

\bibitem[And58]{And58}
P.~W. Anderson.
\newblock Absence of diffusion in certain random lattices.
\newblock {\em Phys. Rev.}, 109(5):1492--1505, 1958.
\bibitem[BK05]{BK05}
J.~Bourgain and C.~E. Kenig.
\newblock On localization in the continuous {A}nderson-{B}ernoulli model in
  higher dimension.
\newblock {\em Invent. Math.}, 161(2):389--426, 2005.

\bibitem[Bou04]{Bou04}
J.~Bourgain.
\newblock On localization for lattice {S}chr\"{o}dinger operators involving
  {B}ernoulli variables.
\newblock In {\em Geometric aspects of functional analysis}, volume 1850 of
  {\em Lecture Notes in Math.}, pages 77--99. Springer, Berlin, 2004.

\bibitem[CKM87]{CKM87}
R.~Carmona, A.~Klein, and F.~Martinelli.
\newblock Anderson localization for {B}ernoulli and other singular potentials.
\newblock {\em Comm. Math. Phys.}, 108(1):41--66, 1987.

\bibitem[CRBLD17]{ca17}
X.~Cao, A.~Rosso, J.-P. Bouchaud, and P.~Le~Doussal.
\newblock Genuine localization transition in a long-range hopping model.
\newblock {\em Phys. Rev. E}, 95(6):062118, 2017.
\bibitem[DLS85]{DLS85}
F.~Delyon, Y.~L\'{e}vy, and B.~Souillard.
\newblock Anderson localization for multidimensional systems at large disorder
  or large energy.
\newblock {\em Comm. Math. Phys.}, 100(4):463--470, 1985.
\bibitem[DERM24]{DMR24}
M.~Disertori, Maturana E.R., and C.~Rojas-Molina.
\newblock Decay of the {G}reen's function of the fractional {A}nderson model
  and connection to long-range {SAW}.
\newblock {\em J. Stat. Phys.}, 191(3):Paper No. 33, 25pp, 2024.

\bibitem[DK89]{DK89}
H.~von Dreifus and A.~Klein.
\newblock A new proof of localization in the {A}nderson tight binding model.
\newblock {\em Comm. Math. Phys.}, 124(2):285--299, 1989.

\bibitem[DS20]{DS20}
J.~Ding and C.K. Smart.
\newblock Localization near the edge for the {A}nderson {B}ernoulli model on
  the two dimensional lattice.
\newblock {\em Invent. Math.}, 219(2):467--506, 2020.

\bibitem[FS83]{FS83}
J.~Fr\"ohlich and T.~Spencer.
\newblock Absence of diffusion in the {A}nderson tight binding model for large
  disorder or low energy.
\newblock {\em Comm. Math. Phys.}, 88(2):151--184, 1983.

	\bibitem[FMSS85]{FMSS85}
	J.~Fr\"ohlich, F.~Martinelli, E.~Scoppola and T.~Spencer.
	\newblock Constructive proof of localization in the {A}nderson tight binding model.
	\newblock {\em Comm. Math. Phys.}, 101(1):21--46, 1985.
\bibitem[GK13]{GK13}
F.~Germinet and A.~Klein.
\newblock A comprehensive proof of localization for continuous {A}nderson
  models with singular random potentials.
\newblock {\em J. Eur. Math. Soc. (JEMS)}, 15(1):53--143, 2013.


	\bibitem[GMP77]{GMP77}
	I.~Goldsheid, S.~Molchanov and L.~Pastur. 
	\newblock A pure point spectrum of the stochastic one-dimensional {S}chr\"odinger operator.
	\newblock {\em Funct. Anal. Appl.}, 11(1):1--8, 1977.
	
	


\bibitem[Gri94]{G93}
V.~Grinshpun.
\newblock Constructive proof of the localization for finite-difference
  infinite-order operator with random potential.
\newblock {\em Random Oper. Stoch. Eqs.}, 2(1):25--42, 1994.

\bibitem[Han19]{Han19}
R.~Han.
\newblock Shnol's theorem and the spectrum of long range operators.
\newblock {\em Proc. Amer. Math. Soc.}, 147(7):2887--2897, 2019.

\bibitem[JK16]{JK16}
S.~Jitomirskaya and I.~Kachkovskiy.
\newblock {$L^2$}-reducibility and localization for quasiperiodic operators.
\newblock {\em Math. Res. Lett.}, 23(2):431--444, 2016.

\bibitem[JLS20]{JLS20}
S.~Jitomirskaya, W.~Liu, and Y.~Shi.
\newblock Anderson localization for multi-frequency quasi-periodic operators on
  {${\Bbb Z}^D$}.
\newblock {\em Geom. Funct. Anal.}, 30(2):457--481, 2020.

\bibitem[JM99]{JM99}
V.~Jak\v{s}i\'{c} and S.~Molchanov.
\newblock Localization for one-dimensional long range random {H}amiltonians.
\newblock {\em Rev. Math. Phys.}, 11(1):103--135, 1999.

\bibitem[JS22]{JS22}
W.~Jian and Y.~Sun.
\newblock Dynamical localization for polynomial long-range hopping random
  operators on {$\Bbb Z^d$}.
\newblock {\em Proc. Amer. Math. Soc.}, 150(12):5369--5381, 2022.

\bibitem[Kir08]{Kir08}
W.~Kirsch.
\newblock An invitation to random {S}chr\"odinger operators.
\newblock In {\em Random {S}chr\"odinger operators}, volume~25 of {\em Panor.
  Synth\`eses}, pages 1--119. Soc. Math. France, Paris, 2008.
\newblock With an appendix by Fr\'ed\'eric Klopp.

\bibitem[Kle93]{Kle93}
A.~Klein.
\newblock Localization in the {A}nderson model with long range hopping.
\newblock {\em Braz. J. Phys.}, 23(4):363--371, 1993.


	\bibitem[KS80]{KS80}
	H.~Kunz and B.~Souillard.
	\newblock Sur le spectre des op\'erateurs aux diff\'erences finies al\'eatoires.
	\newblock{\em Comm. Math. Phys.}, 78(2):201--246, 1980.
	
	\bibitem[LZ22]{LZ22}
	L.~Li and L.~Zhang.
	\newblock {A}nderson-{B}ernoulli localization on the three-dimensional lattice and discrete unique continuation principle.
	\newblock {\em Duke Math. J.}, 171(2):327--415, 2022.



\bibitem[Liu22]{Liu22}
W.~Liu.
\newblock Quantitative inductive estimates for {G}reen's functions of
  non-self-adjoint matrices.
\newblock {\em Anal. PDE}, 15(8):2061--2108, 2022.

\bibitem[P{\"{o}}s90]{Pos90}
J.~P{\"{o}}schel.
\newblock Small divisors with spatial structure in infinite-dimensional
  {H}amiltonian systems.
\newblock {\em Comm. Math. Phys.}, 127(2):351--393, 1990.

\bibitem[Shi21]{Shi21}
Y.~Shi.
\newblock A multi-scale analysis proof of the power-law localization for random
  operators on {$\Bbb{Z}^d$}.
\newblock {\em J. Differential Equations}, 297:201--225, 2021.

\bibitem[Shi22]{Shi22}
Y.~Shi.
\newblock Spectral theory of the multi-frequency quasi-periodic operator with a
  {G}evrey type perturbation.
\newblock {\em J. Anal. Math.}, 148(1):305--338, 2022.

\bibitem[Shi23]{Shi23}
Y.~Shi.
\newblock Localization for almost-periodic operators with power-law long-range
  hopping: a {N}ash-{M}oser iteration type reducibility approach.
\newblock {\em Comm. Math. Phys.}, 402(2):1765--1806, 2023.

\bibitem[SS89]{SS89}
B.~Simon and T.~Spencer.
\newblock Trace class perturbations and the absence of absolutely continuous
  spectra.
\newblock {\em Comm. Math. Phys.}, 125(1):113--125, 1989.

\bibitem[SW22]{SW22}
Y.~Shi and L.~Wen.
\newblock Localization for a class of discrete long-range quasi-periodic
  operators.
\newblock {\em Lett. Math. Phys.}, 112(5):Paper No. 86, 18, 2022.

\bibitem[SW23]{SW23}
Y.~Shi and L.~Wen.
\newblock Diagonalization in a quantum kicked rotor model with non-analytic
  potential.
\newblock {\em J. Differential Equations}, 355:334--368, 2023.

\bibitem[SW24]{SW24}
Y.~Shi and L.~Wen.
\newblock Green's function estimates for quasi-periodic operators on
  $\mathbb{Z}^d$ with power-law long-range hopping.
\newblock {\em arXiv:2408.01913}, 2024.
\bibitem[SW86]{SW86}
B.~Simon and T.~Wolff.
\newblock Singular continuous spectrum under rank one perturbations and
  localization for random {H}amiltonians.
\newblock {\em Comm. Pure Appl. Math.}, 39(1):75--90, 1986.
\bibitem[Wan91]{W91}
W.~Wang.
\newblock Exponential decay of green's functions for a class of long range
  hamiltonians.
\newblock {\em Comm. Math. Phys.}, 136(1):35--43, 1991.

\bibitem[YO87]{YO87}
C.~Yeung and Y.~Oono.
\newblock A conjecture on nonlocal random tight-binding models.
\newblock {\em Europhys. Lett.}, 4(9):1061--1065, 1987.

\end{thebibliography}

\end{document}